\documentclass[10pt,reqno]{amsart}
\usepackage{amssymb,mathrsfs,color}
\usepackage{pinlabel}

\usepackage{amsmath, amsfonts, amsthm, verbatim, amssymb}
\usepackage{epstopdf, mathtools}
\mathtoolsset{showonlyrefs}
\usepackage{color}
\usepackage{esint}
\usepackage{stmaryrd}
\usepackage{booktabs}
\usepackage{graphicx}
\usepackage{graphicx}
\usepackage{subcaption}
\usepackage{placeins}

\usepackage{empheq}
\usepackage[shortlabels]{enumitem}

\newcommand{\rar}{\rightarrow}

\newcommand{\cl}{\mathcal}

\newcommand{\bs}[1]{\boldsymbol{#1}}

\usepackage{fullpage}

\newcommand{\eps}{\epsilon}

\newcommand{\p}{\partial}

\usepackage[multiple]{footmisc}
\numberwithin{equation}{section}

\newtheorem{thm}{Theorem}[section]
\newtheorem{cor}[thm]{Corollary}

\newtheorem{prop}[thm]{Proposition}

\theoremstyle{remark}
\newtheorem{rem}[thm]{Remark}

\usepackage{tensor}

\usepackage{graphicx}
\usepackage{xcolor} % A package to add color.
\usepackage{tensor}
\usepackage{slashed}
\usepackage{cite}

\newcommand{\dmsu}{\delta\msu}
\newcommand{\dmsv}{\delta\msv}
\newcommand{\vWpi}{\mathcal W_{\pi}}
\newcommand{\vWint}{\mathcal W_{\mathrm{int}}}
\newcommand{\vWinta}{\mathcal W_{\mathrm{int}}^{\rm a}}
\newcommand{\vWext}{\mathcal W_{\mathrm{ext}}}
\newcommand{\bbC}{\mathbb C}

\newcommand{\bbE}{\mathbb E}

\newcommand{\bbK}{\mathbb K}

\newcommand{\bbN}{\mathbb N}

\newcommand{\bbR}{\mathbb R}

\newcommand{\msm}{\bs{\mathsf{m}}}
\newcommand{\msn}{\bs{\mathsf{n}}}
\newcommand{\msv}{\bs{\mathsf{v}}}
\newcommand{\msa}{\bs{\mathsf{a}}}

\newcommand{\msg}{\bs{\mathsf{g}}}
\newcommand{\msh}{\bs{\mathsf{h}}}

\newcommand{\msu}{\bs{\mathsf{u}}}
\newcommand{\msx}{\bs{\mathsf{x}}}

\newcommand{\msX}{\bs{\mathsf{X}}}

\newcommand{\mse}{\bs{\mathsf{e}}}

\newcommand{\msE}{\bs{\mathsf{E}}}
\newcommand{\msQ}{\bs{\mathsf{Q}}}

\newcommand{\msT}{\bs{\mathsf{T}}}
\newcommand{\msY}{\bs{\mathsf{Y}}}
\newcommand{\msZ}{\bs{\mathsf{Z}}}

\allowdisplaybreaks[3]
\begin{document}

\title[Strain-gradient special Cosserat rods]{A direct theory of strain-gradient special Cosserat rods via the principle of virtual work}

\author{R. S. Hallyburton and C. Rodriguez$^*$}

\thanks{$^*$ denotes the corresponding author.}

\begin{abstract}
Motivated by recent advances in the synthesis of effective one-dimensional mechanical metamaterials, we develop a direct theory of strain-gradient special Cosserat rods based on the principle of virtual work. The internal virtual work is formulated intrinsically in terms of variations of the classical Cosserat strains and their spatial derivatives, without requiring a three-dimensional parent material or being restricted to elastic materials. Objectivity and insensitivity to rigid virtual displacements characterize the admissible internal-work density, while localization yields the dynamic field equations, higher-order contact interactions, natural boundary conditions, and balances of linear and angular momentum. For inextensible and unshearable rods, the constrained principle determines a unique multiplier functional, although its representation by reactive force and double-force fields possesses a gauge freedom absent from classical rod theory. We also formulate objective elastic, hyperelastic, and viscoelastic constitutive laws, extend transverse hemitropy, transverse isotropy, and flip symmetry to strain-gradient rods, and derive invariant representations of symmetric stored energies. As an application, we analyze a planar inextensible cantilever subjected to a compressive terminal force, with stored energy depending on the spatial derivative of curvature. We prove that the eigenvalues of the linearized problem are simple and coincide precisely with the loads at which bifurcation from the straight state occurs. In addition, each bifurcation point generates a connected global branch of nonlinear equilibria that is locally analytic near its origin.
\end{abstract} 

\maketitle

\section{Introduction}

\subsection{Higher-gradient one-dimensional continua}

Classical special Cosserat rod theories describe a slender body by a directed curve together with strain variables measuring stretch, shear, bending, and twist \cite{antman05}. Recent developments in the modeling and synthesis of mechanical metamaterials\footnote{Here, a mechanical metamaterial is understood as a material designed to achieve a prescribed macroscopic response through the organization of its microstructure. In this sense, a metamaterial is a material ``on demand,'' designed to satisfy prescribed functional requirements.}, however, have produced effective one-dimensional elastic continua whose stored energies also depend on spatial derivatives of the classical rod strains \cite{Barchiesi2019a,Barchiesi2020a,DellIsolaMoschiniTurco2024, MurciaTerranova2025,rizzi26,Boutin26}. These higher-gradient dependencies give rise to non-classical contact interactions and additional natural boundary conditions and, depending on the structure of the stored energy, may also permit non-rigid configurations that store no energy.

An important example is the \textit{pantographic} beam (see Figure \ref{fig:1}), introduced in the small-deformation setting in \cite{Alibert2003} and subsequently extended to finite deformations in \cite{Barchiesi2019a}. Beginning with a periodic structure composed of extensional and rotational springs, the authors apply a heuristic homogenization procedure in the spirit of Piola \cite{dellIsola2014b} to obtain an extensible, unshearable planar rod whose stored energy depends on both curvature and the gradient of stretch. The resulting energy controls the complete second gradient of the centerline, which motivates the terminology \emph{complete second-gradient rod}. Related pantographic constructions and their experimental realization have been studied in \cite{Barchiesi2020a}.

\begin{figure}[t]
    \centering
    \begin{minipage}{0.75\linewidth}
        \centering
        \includegraphics[width=\linewidth,trim=0 245.28 0 24,clip]{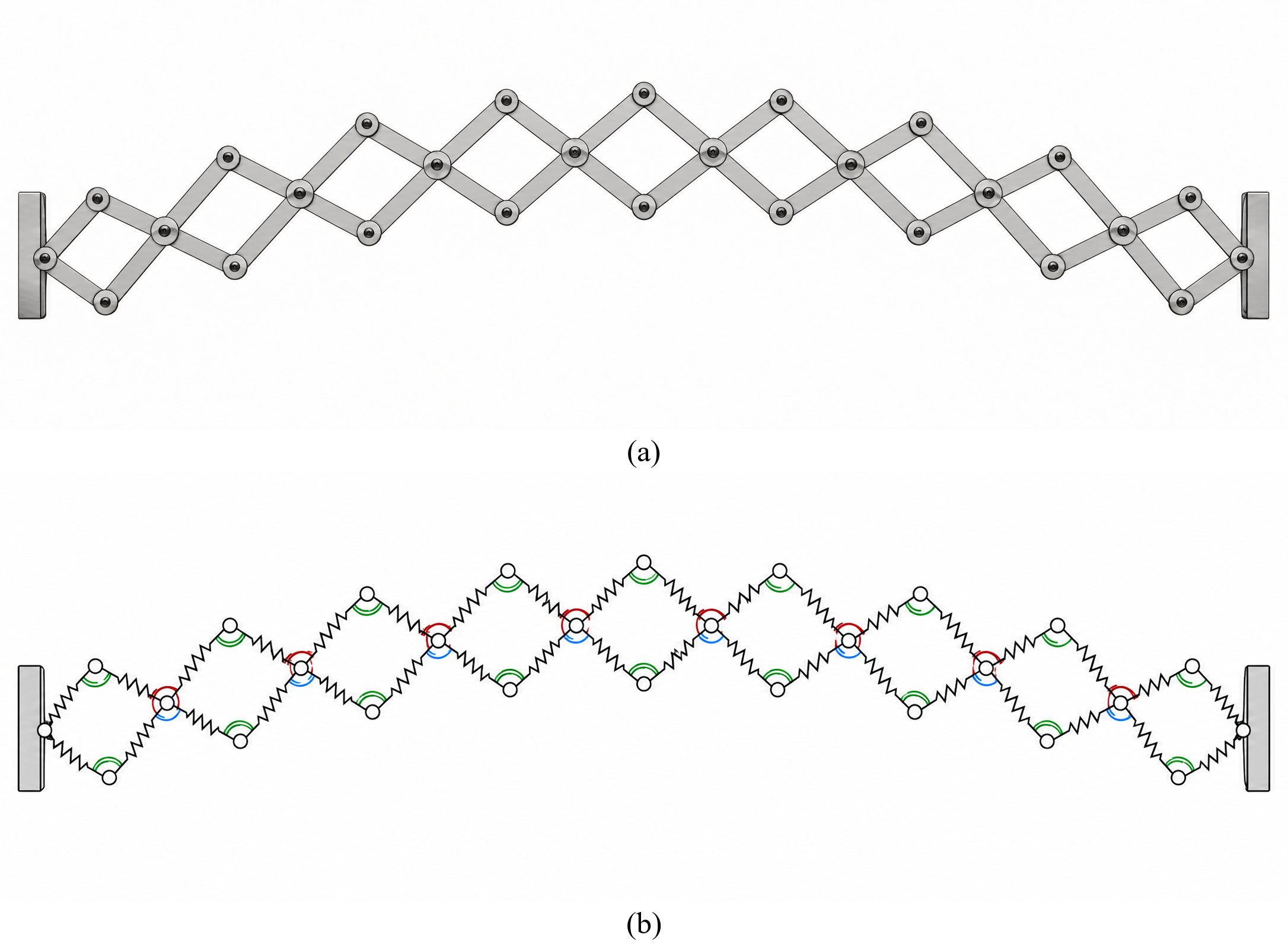}\par
        {\small (A)}\par\smallskip
        \includegraphics[width=\linewidth,trim=0 54.48 0 216,clip]{visuals/pantographic-beam-comparison.png}\par
        {\small (B)}
    \end{minipage}
    \caption{(A) A pantographic beam represented as a periodic lattice of interconnected microstructural cells. (B) The corresponding discrete mechanical model, with black zigzag lines representing extensional springs and colored arcs distinguishing rotational springs at the pivots. As the number of cells increases, the discrete structure is effectively described by a planar strain-gradient rod.}
    \label{fig:1}
\end{figure}

More recent synthesis procedures have led to effective rod models of still higher order. In \cite{rizzi26}, a periodic chain of modified Hart antiparallelograms is shown, through a heuristic homogenization procedure, to be effectively modeled by an inextensible, unshearable planar rod whose stored energy is a quadratic function of the derivative of curvature. Thus, every constant-curvature configuration has zero energy. The resulting rod therefore possesses \emph{floppy modes}, namely non-rigid configurations that store no energy. Other microstructures producing higher-gradient one-dimensional continua have been investigated in \cite{DellIsolaMoschiniTurco2024,Boutin26}; see also \cite{MurciaTerranova2025} for an overview of one-dimensional continua whose energies depend on the gradient of curvature. Both this model and the pantographic beam discussed in the preceding paragraph have stored energies that depend on the first spatial derivatives of strain variables appearing in classical special Cosserat rod theory. This common feature motivates the term \emph{strain-gradient special Cosserat rod} for the broader class of models developed in this work.

These examples belong to the broader \emph{synthesis problem} for mechanical metamaterials. Rather than beginning with a given material and determining its effective behavior, the synthesis problem begins with a desired mechanical effect and asks how to design a microstructure whose effective continuum response realizes that effect \cite{dellIsola2015b, dellIsola2023}. The desired response may be the ability to undergo large deformations with little or no energetic cost while remaining within the elastic regime. This behavior is enabled by floppy modes, which, at the continuum level, arise from the particular dependence of the stored energy on higher spatial gradients. For broader discussions of pantographic structures, mechanical metamaterials, and the synthesis problem, see
\cite{Barchiesi2019a, Barchiesi2019b,dellIsola2019b, dellIsola2019d,dellIsola2023, dellIsola2015b}.

Non-classical rod theories are important not only as effective one-dimensional descriptions of slender metamaterials, but also as building blocks for higher-dimensional models. Elastic surfaces reinforced by families of fibers have long been modeled as networks of interacting one-dimensional continua. For example, the work \cite{Steigmann2015a} developed a general theory of two-dimensional fibrous networks formed from interwoven families of spatial Kirchhoff rods \cite{antman05}. The surface energy accounts for fiber stretch and twist and generalizes earlier theories of fiber-reinforced sheets and networks going back to Rivlin \cite{Rivlin55, Adkins1955, WangPipkin86,WangPipkin1987, Steigmann1991}. This framework has motivated studies of exotic forms of buckling
\cite{Giorgio2015,Giorgio2016c}, the relation between fiber twist and surface geometry in the inextensible fiber setting \cite{Steigmann2018b}, and related phenomena. More generally, metamaterials may be constructed from lower-dimensional mechanical elements, or even from lower-dimensional metamaterials themselves
\cite{Alibert2003,Seppecher2011, Barchiesi2020a, RodriguezEtAl2026Tripantographic}. A sufficiently general theory of non-classical rods may therefore contribute to the synthesis of both one- and higher-dimensional continua.

The diversity of rod models produced by synthesis raises a basic theoretical question: how should one formulate a higher-gradient rod theory when no three-dimensional parent material is prescribed? Existing strain-gradient beam and rod models have largely been obtained by reducing three-dimensional strain-gradient theories. Several strain-gradient beam models \cite{Lazopoulos2014,Lazopoulos2010}, for example, combine three-dimensional strain-gradient elasticity with Bernoulli--Euler or Kirchhoff-type kinematic assumptions and then integrate over the cross section. These reduced theories have primarily been used to study linearized buckling and perturbative postbuckling behavior. More recently, {Yadav and Gupta} \cite{Gupta2024} derived a strain-gradient theory for special Cosserat rods through dimensional reduction of a three-dimensional strain-gradient theory of Toupin--Mindlin type
\cite{Toupin1964,Mindlin1964}. Their theory was used to investigate effects such as strain-gradient stiffening. Such reduction procedures are natural when the rod is understood as a slender specimen of a prescribed three-dimensional material. They are less suited to synthesized one-dimensional continua, whose effective constitutive response may have no evident three-dimensional precursor. This motivates a formulation based \textit{directly} on the intrinsic kinematics of a special Cosserat rod.

In this work, we develop such a formulation using the \emph{principle of virtual work}, following the approaches of Piola \cite{dellIsola2014b} and Germain \cite{Germain1973,Germain1973b, Germain2020}. At each fixed time, the internal virtual work is postulated directly in terms of variations of the classical special Cosserat strains and their spatial derivatives. Localization and integration by parts then determine the governing equations, the higher-order contact interactions, and the natural boundary conditions. This feature is particularly important in a higher-gradient theory. In general, classical balance laws determine the resultant force and couple acting on a material segment, but they do not, by themselves, identify the full hierarchy of higher-order contact interactions. Within the principle of virtual work, these interactions arise naturally as the quantities dual to the boundary values of the virtual displacement and its spatial derivative. Since the internal virtual work vanishes automatically on infinitesimal rigid displacements, the principle requires the external virtual work to vanish on such displacements; the classical balances of linear and angular momentum then follow.

We call the resulting theory a \emph{strain-gradient Cosserat rod theory} because the internal virtual work depends on variations of the first spatial derivatives of the classical Cosserat strain variables, which themselves depend on first derivatives of the centerline and director fields. The order may appear different after kinematic constraints are imposed. For example, in an inextensible and unshearable planar rod, the bending strain is the curvature of the centerline, and its spatial derivative involves the third derivative of the centerline. A second-gradient Cosserat rod may therefore reduce to what is called a third-gradient centerline or beam theory. This distinction reconciles the terminology used here with that of
\cite{DellIsolaMoschiniTurco2024,rizzi26,Boutin26,Eremeyev2026}.

Inspired by the rod energies arising from the synthesis literature, Eremeyev \cite{Eremeyev2026} studied the planar instability of third-gradient beams under several boundary conditions. By analyzing the corresponding linearized equations, he established quantitative relations between their critical forces and those of classical Euler--Bernoulli beams, showing that the additional gradient terms stiffen the rod. He also investigated postbuckling behavior through perturbation expansions. These calculations identify candidate critical loads and formally describe the nearby postbuckling response, but they do not \textit{rigorously} establish that the critical loads of the linearized problem are bifurcation points of the full nonlinear equilibrium problem.

To the best of our knowledge, a direct strain-gradient theory of special Cosserat rods based on the principle of virtual work has not previously been developed in a form that simultaneously develops governing equations, higher-order contact interactions, constitutive restrictions, and material symmetries. The first purpose of the present work is to develop such a theory. Unlike dimensional-reduction approaches, the theory is formulated directly from the intrinsic kinematics of a Cosserat rod and does not require the specification of a three-dimensional parent material. Moreover, the general theory is not constrained to elastic materials. Elastic, hyperelastic, and viscoelastic rods instead arise as constitutive specializations of the general framework. {Furthermore, our direct theory reproduces the governing equations, boundary conditions, and balance laws in the special case of a hyperelastic rod with a stored energy derived from a three-dimensional energy. This provides a substantial validation of our more general approach.}

The second purpose of the paper is to apply this theory to the planar buckling of an inextensible and unshearable strain-gradient Cosserat rod under a terminal compressive force. This problem is a higher-gradient analogue of the classical Euler elastica \cite{antman05}. We prove that bifurcation from the straight configuration occurs precisely at the eigenvalues of the associated linearized boundary value problem. Each bifurcation point belongs to a connected branch of nonlinear equilibria satisfying a global bifurcation alternative, and the branch is locally analytic near the bifurcation point. Thus, the critical loads identified by the linearized theory generate genuine branches of nonlinear equilibria, thereby providing a rigorous nonlinear justification for the predictions of Eremeyev \cite{Eremeyev2026}. This result provides a rigorous basis for local postbuckling expansions and illustrates how non-classical buckling phenomena can be analyzed within the intrinsic theory developed here.

\subsection{Main results and outline}

The paper consists of two main parts. The first develops a direct, dynamic theory of strain-gradient special Cosserat rods. The second applies this theory to a nonlinear planar buckling problem.

Section 2 reviews the kinematics of special Cosserat rods. A configuration is described by a centerline and an orthonormal frame of directors, or equivalently by a centerline and a rotation field. We recall the classical Cosserat strains and introduce their objective material representations. These strains and their spatial derivatives provide the natural arguments of the constitutive functions considered later.

{In Section~3, we formulate the principle of virtual work for strain-gradient special Cosserat rods. Objectivity and insensitivity to infinitesimal rigid virtual displacements characterize the most general internal-work density that is linear in the full second virtual jet: it is work-conjugate to variations of the classical Cosserat strains and their material derivatives. Starting from a virtual-work identity on the whole rod, we derive the bulk field equations, higher-order contact interactions, natural boundary conditions, and classical balances of linear and angular momentum. We also treat inextensible and unshearable rods and show that the constrained principle determines a unique multiplier functional, although its representation by individual reactive force and double-force fields is not unique, in contrast to the classical theory of Cosserat rods.}

Section 4 develops the objective constitutive theory for strain-gradient Cosserat rods, including elastic, hyperelastic, and viscoelastic response. For inextensible and unshearable rods, a heuristic penalty argument relates the reactive contact force and double force and selects a representative from their gauge-equivalence class. We then extend transverse hemitropy, transverse isotropy, and flip symmetry to the strain-gradient setting and derive invariant representations of the stored energy, including representations for energies defined away from an invariant closed set.

Section 5 considers an inextensible, unshearable, and untwisted planar strain-gradient rod that is clamped at one end and subjected to a compressive terminal force at the other. The equilibrium equation is a fourth-order nonlinear boundary value problem that may be viewed as a strain-gradient analogue of the Euler elastica. We first prove that the eigenvalues of the linearized problem are simple. After reformulating the nonlinear problem as a compact integral equation, we show that bifurcation from the trivial branch occurs precisely at these eigenvalues. From each bifurcation point, there emanates a connected branch that is either unbounded or contains another bifurcation point, and each branch is locally analytic near its point of origin. In particular, for the cantilever setting considered here, these results place the corresponding predictions of Eremeyev's purely linearized analysis on a rigorous nonlinear footing: each predicted critical load is realized as a bifurcation point of the full equilibrium equations. Finally, we provide numerical illustrations of the first three nontrivial buckling branches of the planar rod. {The important question of whether these branches intersect each other remains open}.

\subsection*{Acknowledgements} C. R. gratefully acknowledges support of NSF DMS-2307562. R. S. H. gratefully acknowledges support of the DoD NDSEG Fellowship. R. S. H. also acknowledges that this material is based upon work supported by the Air Force Office of Scientific Research under award number FA9550-25-C-B010 in the amount of \$133,200.

\subsection*{AI Disclosure} ChatGPT 5.6-Sol and Claude Opus 5.5 were used as adversarial readers, for literature review, to check for typos, and to improve the exposition. The authors retain exclusive responsibility for the correctness, accuracy, and validity of the results presented in this work.

\section{Preliminaries}

We briefly review the kinematics of special Cosserat rods (see Ch. 8 of \cite{antman05} for a more comprehensive introduction).\footnote{Unless otherwise noted, we use Einstein index notation; repeated indices denote summation. Greek and Latin indices range over $\{1, 2\}$ and $\{1, 2, 3\}$, respectively. }

\subsection{Kinematics and strains}
We denote three-dimensional Euclidean space by $\bbE^3$, and we identify its translation space with $\bbR^3$ via a fixed orthonormal basis $\{ \bs e_k \}_{k = 1}^3$. 
Throughout this work, we use standard vector and tensor operations in $\bbR^3$, and unless stated otherwise, repeated indices are summed over their range. 

Let $[0,L]$ be the reference interval parameterizing the material points of an initially straight rod, with reference length $L$.\footnote{The formulation presented herein admits a straightforward generalization to initially curved rods. To simplify the exposition, however, we consider only initially straight rods.} The \emph{current configuration} of a special Cosserat rod at time $t$ is defined by a triple: 
\begin{align}
[0,L] \ni s \mapsto (\bs r(s,t), \bs d_1(s,t), \bs d_2(s,t)) \in \bbE^3 \times \bbR^3 \times \bbR^3,
\end{align}
with $\bs d_1(s,t)$ and $\bs d_2(s,t)$ orthonormal for each $(s,t)$. The curve $\bs r(\cdot,t)$ is the \emph{centerline} of the rod at time $t$, and $\{\bs d_1(s,t), \bs d_2(s,t)\}$ are the \emph{directors} at $(s,t)$, vectors regarded as tangent to the current material cross section at $\bs r(s,t)$. 

Let
\begin{align*}
	\bs d_3(s,t) = \bs d_1(s,t) \times \bs d_2(s,t).
\end{align*}
Then $\{ \bs d_k(s,t) \}$ is a right-handed orthonormal basis for $\bbR^3$ describing a configuration of the material cross section at $\bs r(s,t)$. 
%\begin{figure}[htbp]
 %   \centering
  %  \includegraphics[width=0.96\linewidth]{visuals/cosserat_rod_configurations.pdf}
  %  \caption{Reference and current configurations of a special Cosserat rod. The Carolina blue curves are the centerlines, while the black vectors show the reference basis $\{\bs e_k\}$ and the current director frame $\{\bs d_k\}$. In the straight reference configuration, $\bs e_3$ is tangent to the centerline. In the current configuration, $\bs d_3$ need not be tangent; the displayed tilt illustrates shear.}
   % \label{fig:rod-configurations}
%\end{figure}
The rotation  
\begin{align}
	\bs R(s,t) &= \bs d_k(s,t) \otimes \bs e_k
\end{align}
is the unique rotation satisfying
\begin{align}
	\bs R(s,t) \bs e_k = \bs d_k(s,t), \quad k = 1,2,3. \label{eq:rotdir}
\end{align} 
Conversely, a rotation $\bs R(s,t)$ for each $(s,t)$ specifies directors $\{ \bs d_k(s,t)\}$ via \eqref{eq:rotdir} and a configuration at time $t$ can be equivalently defined in terms of the pair 
\begin{align}
	[0,L] \ni s \mapsto (\bs r(s,t), \bs R(s,t)) \in \bbE^3 \times \mathrm{Orth}^+. 
\end{align}

Since, for each $(s,t)$, $\bs R(s,t)$ is a rotation of $\bbR^3$, there exist unique vector fields
\begin{align*}
	\bs u(s,t) &= u_k(s,t) \bs d_k(s,t), \\ \bs w(s,t) &= w_k(s,t) \bs d_k(s,t),
\end{align*} 
such that for all $(s,t)$,
\begin{align}
\p_s \bs R(s,t) &= (\bs u(s,t)\times) \bs R(s,t), \label{eq:dsR}
\\ \p_t \bs R(s,t) &= (\bs w(s,t) \times ) \bs R(s,t), \label{eq:dtR}
\end{align}
where $\bs a \times$ is the skew-symmetric tensor generated by the axial vector $\bs a$. 
The components $u_1$ and $u_2$ of $\bs u$ are referred to as the \emph{flexural strains}, and the component $u_3$ is referred to as the \emph{torsional strain} (or \emph{twist}). The vector $\bs w(s,t)$ is referred to as the resultant angular velocity of the material section at $\bs r(s,t)$.

The relations \eqref{eq:dsR}-\eqref{eq:dtR} are equivalent to 
\begin{align}
	\p_s \bs d_k(s,t) = \bs u(s,t) \times \bs d_k(s,t), \quad 
	\p_t \bs d_k(s,t) = \bs w(s,t) \times \bs d_k(s,t), \label{eq:dd}
\end{align}
i.e., 
\begin{align}
	\bs u(s,t) = \frac{1}{2} \bs d_k(s,t) \times \p_s \bs d_k(s,t), \quad \bs w(s,t) = \frac{1}{2} \bs d_k(s,t) \times \p_t \bs d_k(s,t). \label{eq:uwexplicit} 
\end{align}
By differentiating \eqref{eq:uwexplicit} and using $\p_{st} \bs d_k(s,t) = \p_{ts} \bs d_k(s,t)$, we obtain the compatibility equation,
\begin{align}
	\p_s \bs w(s,t) = \p_t \bs u(s,t) + \bs u(s,t) \times \bs w(s,t). \label{eq:compatibility}
\end{align}
Given $\bs u$ (or $\bs w$), we may solve \eqref{eq:dsR} (or \eqref{eq:dtR}) and recover the directors via defining $\bs d_k(s,t) := \bs R(s,t) \bs e_k$.

We may also express the tangent vector to the {centerline} in the basis $\{ \bs d_k(s,t) \}_k$ via 
\begin{align}
	\bs v(s,t) := \p_s \bs r(s,t) = v_k(s,t) \bs d_k(s,t).
\end{align}
The components $v_1$ and $v_2$ are referred to as the \emph{shear strains}. The component $v_3$ is referred to as the \emph{dilatation strain}, and an orientation of the director $\bs d_3(s,t)$ relative to the {centerline} is fixed by requiring that for all $(s,t) \in [0,L] \times [0,\infty)$, 
\begin{align}
 v_3(s,t) > 0. \label{eq:v3pos}
\end{align} 
The restriction \eqref{eq:v3pos} also implies that the \emph{stretch} of the rod is never zero, $|\p_s \bs r(s,t)| > 0$, and that the rod cannot be sheared so severely that a section becomes tangent to the {centerline}. 

We define the \textit{pullbacks} of the vector fields $\bs u$, $\bs v$ via 
\begin{align}
\msu(s,t) := \bs R^T(s,t) \bs u(s,t) = u_k(s,t) \bs e_k, \quad \msv(s,t) := \bs R^T(s,t) \bs v(s,t) = v_k(s,t) \bs e_k.   
\end{align}
We make the simple but important observation that the vector fields $\msu(\cdot, t)$ and $\msv(\cdot, t)$ are \textit{objective}: if $\bs Q(t) \in \mathrm{Orth}^+$ and $\bs c(t) \in \bbR^3$, then for the motion 
\begin{align}
\label{eq:change_of_frame}
(\bs r^*(\cdot, t), \bs R^*(\cdot, t)) = (\bs c(t) + \bs Q(t)(\bs r(\cdot, t) - \bs o), \bs Q(t)\bs R(\cdot, t))
\end{align}
related to the motion $(\bs r(\cdot,t), \bs R(\cdot, t))$ by a change of frame, we have 
\begin{align}
 \msu^*(\cdot,t) = \msu(\cdot, t), \quad \msv^*(\cdot, t) = \msv(\cdot, t). \label{eq:uvobjective}
\end{align}

\section{Governing equations and balance laws via the principle of virtual work}

This section develops the governing equations of strain-gradient Cosserat rods from the principle of virtual work. We first characterize the most general local internal virtual-work density that is linear in the second virtual jet, objective, and vanishes on infinitesimal rigid displacements. We then show that the virtual-work identity on the whole rod determines the bulk field equations, the physical-end conditions, and compatible contact interactions on internal cuts, thereby inducing a balance of virtual work on every material subsegment. After deriving the classical balances of linear and angular momentum by Piola's method, we extend the formulation to internally constrained rods and characterize the resulting reactive fields and their gauge freedom, the latter property standing in stark contrast to the classical Cosserat rod theory.

\subsection{Principle of virtual work}

Fix a time $t$. Throughout this section we suppress the dependence of the fields on $(s,t)$ whenever no confusion can arise. In the unconstrained theory developed first in this section, the configuration manifold that we adopt is, for a fixed $p\in[1,\infty)$,
\begin{align}
	\cl M
	:=
	\left\{
	(\bs r,\bs R)\in
	W^{2,p}([0,L];\bbE^3\times\mathrm{Orth}^+)
	\ \middle|\
	v_3(s)>0\text{ for every }s\in[0,L]
	\right\}.
\end{align}
The strict positivity condition is meaningful pointwise because the relevant one-dimensional Sobolev fields have continuous representatives. {The constrained configuration spaces introduced later are treated as submanifolds of this unconstrained manifold; their tangent spaces therefore impose the corresponding restrictions on virtual displacements.} If $\mathcal N$ is a smooth manifold embedded in $\mathbb R^m$, we use the notation
\begin{align}
W^{k,p}([0,L];\mathcal N)
:=
\left\{
u\in W^{k,p}([0,L];\mathbb R^m)
\mid
u(s)\in\mathcal N \text{ for a.e. } s\in[0,L]
\right\}.
\end{align}
{The choice of $W^{2,p}$ is dictated by the strain-gradient kinematics. The internal virtual work involves variations of first derivatives of the classical strains with respect to the material coordinate $s$ and hence second derivatives of the configuration with respect to $s$. In the particular case of an \emph{unshearable} rod, discussed in Sections 3.4 and 5, the tangent to the centerline is aligned with the third director, and the internal virtual work depends on variations of centerline derivatives with respect to $s$ of order as high as three. A tangent vector to $\cl M$ at $(\bs r,\bs R)$ is interpreted as a generalized \emph{virtual displacement}. In the unconstrained theory developed in this section, its translational and rotational parts can be chosen independently; when internal constraints are imposed, they must instead belong to the corresponding tangent space.}

More precisely, let $(\bs r_\eps,\bs R_\eps):(-\eps_0,\eps_0)\to\cl M$ be a smooth curve through $(\bs r,\bs R)$ such that
\begin{align}
	\bs r_\eps = \bs r + \eps \bs \varphi + O (\eps^2), \quad \bs R_\eps = \exp(\eps \bs \psi \times) \bs R, \label{eq:virtvel}
\end{align}
{where $\bs\varphi,\bs\psi\in W^{2,p}([0,L];\mathbb R^3)$. Here $\bs\varphi$ is the virtual centerline displacement and $\bs\psi$ is the axial vector of the virtual rotation. Differentiating the objective strain fields with respect to the variation parameter gives}
\begin{align}
	\dmsu &:= \frac{d}{d\eps} \msu_\eps \Big |_{\eps = 0} = \bs R^T \partial_s \bs \psi, \label{eq:dotu}\\
	\dmsv &:= \frac{d}{d\eps} \msv_\eps \Big |_{\eps = 0} = \bs R^T \bigl (\partial_s \bs \varphi + \bs v \times \bs \psi \bigr ). \label{eq:dotv}
\end{align}

{We now formulate two requirements for an internal virtual-work functional directly at the level of the local internal virtual-work density. Since $\partial_s\dmsu$ and $\partial_s\dmsv$ involve second derivatives of $\bs\psi$ and $\bs\varphi$, respectively, define the full second jet of a virtual displacement by}
\begin{align}
	j_s^2(\bs\varphi,\bs\psi)
	:=
	\bigl(
	\bs\varphi,\bs\psi,
	\partial_s\bs\varphi,\partial_s\bs\psi,
	\partial_s^2\bs\varphi,\partial_s^2\bs\psi
	\bigr)(s).
	\notag
\end{align}
For a fixed configuration and state, let
\begin{align}
	\pi_s[\bs r,\bs R]:
	(\bbR^3)^6\rightarrow\bbR
	\notag
\end{align}
be linear in the six vector arguments of the second jet. We assume only that the scalar density is integrable for the configurations and virtual fields under consideration, so that its restriction to every material segment $[a,b]\subseteq[0,L]$ defines a {virtual-work functional}
\begin{align}
	\vWpi[\bs r,\bs R]
	\bigl([a,b],(\bs\varphi,\bs\psi\times)\bigr)
	:=
	\int_a^b
	\pi_s[\bs r,\bs R]
	\bigl(j_s^2(\bs\varphi,\bs\psi)\bigr)\,ds.
	\label{eq:workfunctional}
\end{align}

Consider a change of frame
\begin{align}
	\bs r^*
	=
	\bs c+\bs Q(\bs r-\bs o), \quad 
	\bs R^*=\bs Q\bs R, \quad 
	\bs\varphi^*=\bs Q\bs\varphi, \quad 
	\bs\psi^*=\bs Q\bs\psi,
\end{align}
where $\bs Q\in\mathrm{Orth}^+$ and $\bs c\in\bbR^3$ are independent of $s$. We say that the density is \emph{pointwise objective} if
\begin{align}
	\pi_s[\bs r^*,\bs R^*]
	\bigl(j_s^2(\bs\varphi^*,\bs\psi^*)\bigr)
	=
	\pi_s[\bs r,\bs R]
	\bigl(j_s^2(\bs\varphi,\bs\psi)\bigr)
	\label{eq:density-objectivity}
\end{align}
at every material point and for every second virtual jet.

Fix an origin $\bs o\in\bbR^3$, an infinitesimal translation vector $\bs c_0\in\bbR^3$, and an axial vector $\bs d\in\bbR^3$. For $\eps$ near zero, consider the family of superposed rigid motions
\begin{align}
	\bs r_\eps
	=
	\bs o+\eps\bs c_0
	+\exp(\eps\bs d\times)(\bs r-\bs o), \quad 
	\bs R_\eps
	=
	\exp(\eps\bs d\times)\bs R,
\end{align}
where $\bs d\times$ denotes the skew-symmetric tensor with axial vector $\bs d$. This family reduces to $(\bs r, \bs R)$ at $\eps=0$. Differentiating with respect to $\eps$ at $\eps=0$ gives
\begin{align}
	\bs\varphi_{\mathrm{rig}}
	:=
	\left.\frac{d}{d\eps}\bs r_\eps\right|_{\eps=0}
	=
	\bs c_0+\bs d\times(\bs r-\bs o),
	\quad
	\bs\psi_{\mathrm{rig}}
	=
	\bs d.
\end{align}
{Thus, $\bs\varphi_{\mathrm{rig}}$ and $\bs\psi_{\mathrm{rig}}$ are respectively the translational and axial rotational components of the associated infinitesimal rigid virtual displacement.}

We say that the density \emph{{vanishes for infinitesimal rigid virtual displacements}} if
\begin{align}
	\pi_s[\bs r,\bs R]
	\bigl(
	j_s^2(\bs\varphi_{\mathrm{rig}},\bs\psi_{\mathrm{rig}})
	\bigr)
	=0
	\label{eq:density-rigid-insensitivity}
\end{align}
{at every material point and for every infinitesimal rigid virtual displacement. Because $\pi_s$ is linear, this is equivalent to invariance under adding a second jet corresponding to an infinitesimal rigid virtual displacement. The pointwise requirements \eqref{eq:density-objectivity} and \eqref{eq:density-rigid-insensitivity} immediately give the corresponding properties of $\vWpi$ on every material segment.}

The following result is an algebraic characterization at a fixed material point. It identifies the most general local density that is linear in the second virtual jet, pointwise objective, and rigid-motion insensitive.

\begin{prop}\label{prop:local-work-representation}
{A linear second-jet density $\pi_s$ is pointwise objective and vanishes for infinitesimal rigid virtual displacements if and only if there exist unique material vectors $\msm,\msn,\msg,\msh\in\bbR^3$, whose dependence on the configuration and material point is suppressed, such that}
\begin{align}
	\msm^*=\msm,
	\quad
	\msn^*=\msn,
	\quad 
	\msg^*=\msg,
\quad 
	\msh^*=\msh,
	\notag
\end{align}
and
\begin{align}
	\pi_s[\bs r,\bs R]
	\bigl(j_s^2(\bs\varphi,\bs\psi)\bigr)
	=
	\msm\cdot\dmsu
	+\msn\cdot\dmsv
	+\msg\cdot\partial_s\dmsu
	+\msh\cdot\partial_s\dmsv.
	\label{eq:objective-linear-work-representation}
\end{align}
\end{prop}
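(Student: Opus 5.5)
The plan is a change of variables on the second jet followed by a dimension count. At a fixed material point, with $(\bs r,\bs R)$ fixed (so $\bs v,\bs u,\p_s\bs v,\p_s\bs u$ are fixed), I would introduce the linear map
\begin{align}
\Lambda:(\bbR^3)^6\to(\bbR^3)^6,\quad
j_s^2(\bs\varphi,\bs\psi)\mapsto\bigl(\bs\varphi,\bs\psi,\dmsu,\dmsv,\p_s\dmsu,\p_s\dmsv\bigr).
\end{align}
From \eqref{eq:dotu}--\eqref{eq:dotv} and $\p_s\bs R^T=-\bs R^T(\bs u\times)$ one gets
\begin{align}
\p_s\dmsu=\bs R^T\bigl(\p_s^2\bs\psi-\bs u\times\p_s\bs\psi\bigr),\qquad
\p_s\dmsv=\bs R^T\bigl(\p_s^2\bs\varphi+\p_s\bs v\times\bs\psi+\bs v\times\p_s\bs\psi-\bs u\times(\p_s\bs\varphi+\bs v\times\bs\psi)\bigr).
\end{align}
Hence $\Lambda$ is block lower-triangular with invertible ($\bs R^T$ or identity) diagonal blocks, so it is a linear isomorphism. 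Any linear $\pi_s$ can therefore be written uniquely as $\bs a\cdot\bs\varphi+\bs b\cdot\bs\psi+\msm\cdot\dmsu+\msn\cdot\dmsv+\msg\cdot\p_s\dmsu+\msh\cdot\p_s\dmsv$.

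For sufficiency, I would check that the rigid fields $\bs\varphi_{\rm rig}=\bs c_0+\bs d\times(\bs r-\bs o)$, $\bs\psi_{\rm rig}=\bs d$ give $\p_s\bs\psi=0$ and $\p_s\bs\varphi=\bs d\times\bs v$, so $\dmsv=\bs R^T(\bs d\times\bs v+\bs v\times\bs d)=0$ and $\dmsu=0$. Since these are identities in $s$ along the rigid fields, their $s$-derivatives vanish as well. Under a change of frame, $\bs R^{*T}\p_s\bs\psi^*=\bs R^T\p_s\bs\psi$, and similarly for the other terms because $\bs v^*=\bs Q\bs v$. So all four material strain variations are frame-invariant, and the representation \eqref{eq:objective-linear-work-representation} with invariant coefficients is objective.

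For necessity, I would impose rigid insensitivity on the general form. At a fixed point, $\bs\varphi_{\rm rig}(s)$ ranges over all of $\bbR^3$ via $\bs c_0$, while $\bs d$ is free, and the strain-variation slots are zero. This forces $\bs a\cdot\bs c_0+\bs a\cdot(\bs d\times(\bs r-\bs o))+\bs b\cdot\bs d=0$ for all $\bs c_0,\bs d$, so $\bs a=0$ and then $\bs b=0$. The remaining coefficients are unique by invertibility of $\Lambda$. Objectivity then says the density evaluated at starred data equals the original. Since the material variations are invariant, this reads $(\msm^*-\msm)\cdot\dmsu+\dots=0$ for all jets, and surjectivity of $\Lambda$ onto the last four slots gives $\msm^*=\msm$, and likewise for $\msn,\msg,\msh$.

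The main obstacle is the bookkeeping in showing $\Lambda$ is an isomorphism, in particular computing $\p_s\dmsv$ correctly. The key point there is that $\p_s^2\bs\varphi$ and $\p_s^2\bs\psi$ appear only through $\bs R^T$ applied to them plus lower-order terms.

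A subtle point is that "pointwise" rigid insensitivity uses the jet of a genuine rigid field. The prolongation identities $\p_s^k\dmsu=\p_s^k\dmsv=0$ hold because the rigid fields make $\dmsu$ and $\dmsv$ vanish identically along $[0,L]$, not just at one point. This needs the configuration to have enough regularity ($W^{2,p}$, with $\p_s\bs v$ taken a.e.), which I would note explicitly.
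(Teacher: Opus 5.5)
Your proof is correct and is essentially the paper's argument. The paper shows that $T_s\colon j_s^2(\bs\varphi,\bs\psi)\mapsto(\dmsu,\dmsv,\p_s\dmsu,\p_s\dmsv)$ is onto and that its kernel is exactly the rigid second jets, so $\pi_s$ factors through $T_s$; you instead complete $T_s$ to the triangular isomorphism $\Lambda$ and use rigid insensitivity to kill the coefficients of the zeroth-order slots, which is the same argument written in adapted coordinates, and your objectivity step and converse coincide with the paper's.
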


\begin{proof}
Fix a material point $s$ and a configuration, and suppress their appearance in the notation. Let $\mathcal J_s^2\cong(\bbR^3)^6$ denote the second-jet space at $s$, and define the linear map
\begin{align}
	T_s:\mathcal J_s^2\rightarrow(\bbR^3)^4,
	\quad 
	T_s\bigl(j_s^2(\bs\varphi,\bs\psi)\bigr)
	:=
	\bigl(
	\dmsu,
	\dmsv,
	\partial_s\dmsu,
	\partial_s\dmsv
	\bigr).
	\notag
\end{align}
In terms of the second jet,
\begin{gather}
	\dmsu
	=\bs R^T\partial_s\bs\psi,
\quad 
	\dmsv
	=\bs R^T
	\bigl(\partial_s\bs\varphi+\bs v\times\bs\psi\bigr),
	\notag\\
	\partial_s\dmsu
	=\bs R^T
	\bigl(
	\partial_s^2\bs\psi
	-\bs u\times\partial_s\bs\psi
	\bigr),
	\quad 
	\partial_s\dmsv
	=\bs R^T\bigl[
	\partial_s^2\bs\varphi
	+\partial_s\bs v\times\bs\psi
	+\bs v\times\partial_s\bs\psi
	-\bs u\times
	\bigl(
	\partial_s\bs\varphi+\bs v\times\bs\psi
	\bigr)
	\bigr].
	\label{eq:virtual-strain-jet-map}
\end{gather}
The map $T_s$ is onto. Indeed, for arbitrary material vectors
$\bs\alpha,\bs\beta,\bs\gamma,\bs\delta$, prescribe a second jet at $s$ by
\begin{gather}
	\bs\varphi=\bs0,
	\quad 
	\bs\psi=\bs0,
	\notag\\
	\partial_s\bs\psi=\bs R\bs\alpha,
	\quad 
	\partial_s\bs\varphi=\bs R\bs\beta,
	\notag\\
	\partial_s^2\bs\psi
	=\bs R\bs\gamma+\bs u\times\bs R\bs\alpha,
	\quad 
	\partial_s^2\bs\varphi
	=\bs R\bs\delta
	-\bs v\times\bs R\bs\alpha
	+\bs u\times\bs R\bs\beta.
	\notag
\end{gather}
Substitution gives
\begin{align}
	T_s\bigl(j_s^2(\bs\varphi,\bs\psi)\bigr)
	=
	(\bs\alpha,\bs\beta,\bs\gamma,\bs\delta),
	\notag
\end{align}
which proves surjectivity.

Suppose that a difference jet
$j_s^2(\Delta\bs\varphi,\Delta\bs\psi)$ belongs to $\ker T_s$. Equation \eqref{eq:virtual-strain-jet-map} then gives
\begin{gather}
	\partial_s\Delta\bs\psi
	=\bs0,
	\quad 
	\partial_s\Delta\bs\varphi
	=\Delta\bs\psi\times\bs v,
	\notag\\
	\partial_s^2\Delta\bs\psi
	=\bs0,
	\quad 
	\partial_s^2\Delta\bs\varphi
	=\Delta\bs\psi\times\partial_s\bs v.
	\notag
\end{gather}
Set
\begin{align}
	\bs d
	:=\Delta\bs\psi,
	\quad
	\bs c_0
	:=
	\Delta\bs\varphi
	-\bs d\times(\bs r-\bs o).
	\notag
\end{align}
The value and first two derivatives of
\begin{align}
	\bs\varphi_{\mathrm{rig}}
	=
	\bs c_0+\bs d\times(\bs r-\bs o),
	\quad 
	\bs\psi_{\mathrm{rig}}
	=\bs d
	\notag
\end{align}
agree with the difference jet at $s$. Conversely, every rigid second jet is mapped to zero by $T_s$. Therefore,
\begin{align}
	\ker T_s
	=
	\left\{
	j_s^2(\bs\varphi_{\mathrm{rig}},\bs\psi_{\mathrm{rig}})
	\mid \bs c_0,\bs d\in\bbR^3
	\right\}.
	\notag
\end{align}

{Insensitivity to infinitesimal rigid virtual displacements} implies $\ker T_s\subseteq\ker\pi_s$; hence any two second jets having the same image under $T_s$ have the same value under $\pi_s$, and we may define the unique linear functional $\widehat\pi_s:(\bbR^3)^4\to\bbR$ by
\[
\widehat\pi_s(\bs\alpha,\bs\beta,\bs\gamma,\bs\delta):=\pi_s(J),
\]
where $J$ is any second jet satisfying $T_sJ=(\bs\alpha,\bs\beta,\bs\gamma,\bs\delta)$.
Every linear functional on $(\bbR^3)^4$ has a unique representation
\begin{align}
	\widehat\pi_s
	(\bs\alpha,\bs\beta,\bs\gamma,\bs\delta)
	=
	\msm\cdot\bs\alpha
	+\msn\cdot\bs\beta
	+\msg\cdot\bs\gamma
	+\msh\cdot\bs\delta.
	\notag
\end{align}
{This gives \eqref{eq:objective-linear-work-representation}. The four arguments of $\widehat\pi_s$ are variations of material strain vectors and are unchanged by a change of observer.} Pointwise objectivity and uniqueness therefore imply
$\msm^*=\msm$, $\msn^*=\msn$, $\msg^*=\msg$, and $\msh^*=\msh$.

Conversely, if the coefficient vectors are observer-invariant, the representation \eqref{eq:objective-linear-work-representation} is pointwise objective. All four variations of material strain vectors vanish on a rigid second jet, so the density also vanishes for infinitesimal rigid virtual displacements.
\end{proof}

It follows from Proposition~\ref{prop:local-work-representation} that, within the class of internal virtual-work functionals derived from densities that are linear in the second virtual jet, every pointwise objective density that vanishes on infinitesimal rigid virtual displacements generates an internal virtual-work functional on $[0,L]$ of the form
\begin{align}
	\vWint\bigl([0,L],(\bs\varphi,\bs\psi\times)\bigr)
	:=
	\int_0^L
	\left(
	\msm\cdot\dmsu
	+\msn\cdot\dmsv
	+\msg\cdot\partial_s\dmsu
	+\msh\cdot\partial_s\dmsv
	\right)\,ds.
	\label{eq:wint}
\end{align}
From this point onward, we restrict attention to coefficient fields satisfying
\begin{align}
	\msm,\msn,\msg,\msh
	\in L^{p'}([0,L];\mathbb R^3),
	\quad
	\frac{1}{p}+\frac{1}{p'}=1.
	\notag
\end{align}
{Their integrability is an additional analytic assumption on the class of internal virtual works studied below.} For each fixed admissible configuration, this assumption ensures that \eqref{eq:wint} defines a continuous linear functional on
$W^{2,p}([0,L];\bbR^3)\times W^{2,p}([0,L];\bbR^3)$, {the space of vector pairs $(\bs\varphi,\bs\psi)$ parameterizing the virtual displacements.}

{The fields $\msm$ and $\msn$ are work-conjugate to the virtual bending--twist and shear--extension strains, respectively, while $\msg$ and $\msh$ are work-conjugate to the spatial derivatives of those variations.} Their spatial representations are
\begin{align}
	\bs m:=\bs R\msm, \quad 
	\bs n:=\bs R\msn, \quad 
	\bs g:=\bs R\msg, \quad 
	\bs h:=\bs R\msh.
	\label{eq:spatialkinetics}
\end{align}
{The internal virtual work for any material segment $[a,b]\subseteq[0,L]$ is defined by}
\begin{align}
	\vWint\bigl([a,b],(\bs\varphi,\bs\psi\times)\bigr)
	:=
	\int_a^b
	\left(
	\msm\cdot\dmsu
	+\msn\cdot\dmsv
	+\msg\cdot\partial_s\dmsu
	+\msh\cdot\partial_s\dmsv
	\right)\,ds.
	\label{eq:wintloc}
\end{align}
{The external virtual work on the same segment is taken to be}
\begin{align}
	\vWext\bigl([a,b],(\bs\varphi,\bs\psi\times)\bigr)
	&:=
	\int_a^b
	\left(
	\tilde{\bs\ell}\cdot\bs\psi
	+\tilde{\bs f}\cdot\bs\varphi
	\right)\,ds
	+\bs m_b\cdot\bs\psi(b)
	-\bs m_a\cdot\bs\psi(a)
	\notag\\
	&\quad
	+\bs M_b\cdot\partial_s\bs\psi(b)
	-\bs M_a\cdot\partial_s\bs\psi(a)
	+\bs n_b\cdot\bs\varphi(b)
	-\bs n_a\cdot\bs\varphi(a)
	\notag\\
	&\quad
	+\bs N_b\cdot\partial_s\bs\varphi(b)
	-\bs N_a\cdot\partial_s\bs\varphi(a).
\end{align}
Here $\tilde{\bs\ell},\tilde{\bs f}\in L^{p'}([0,L];\mathbb R^3)$ are distributed load resultants, and
$\bs m_i,\bs M_i,\bs n_i,\bs N_i\in\mathbb R^3$, for $i=a,b$, are endpoint contact vectors. {The vectors $\bs m_i$ and $\bs n_i$ are work-conjugate to the endpoint values of $\bs\psi$ and $\bs\varphi$, whereas $\bs M_i$ and $\bs N_i$ are work-conjugate to their first derivatives with respect to $s$.} The displayed signs encode the outward orientations of the two ends of the segment.

At the physical ends $i=0,L$, these contact vectors represent prescribed end interactions, subject to the adopted orientation convention. At an internal cut, however, they are not independently prescribed data. {Section~3.2 derives them from the internal-work density and shows how the interactions on the two sides of a cut are related.}

{We postulate the \textbf{principle of virtual work} on the whole rod: at every time,}
\begin{align}
	\vWint\bigl([0,L],(\bs\varphi,\bs\psi\times)\bigr)
	=
	\vWext\bigl([0,L],(\bs\varphi,\bs\psi\times)\bigr)
	\label{eq:pvw}
\end{align}
{for every virtual displacement $(\bs\varphi,\bs\psi\times)$. No balance identity is assumed independently on proper subsegments of $[0,L]$. Section~3.2 shows that the whole-rod identity \eqref{eq:pvw}, together with the local representation of internal virtual work, determines the bulk balance equations and the induced contact interactions on internal cuts. These induced interactions yield a compatible balance of virtual work on every subsegment.}

\subsection{Local field equations and boundary interactions}
We first establish the regularity needed for integration by parts from the global weak identity, with the $L^{p'}$ coefficients and loads specified in Section 3.1. For a spatial vector field $\bs\chi$, define its director-covariant derivative by
\begin{align}
\mathscr D_s\bs\chi:=\partial_s\bs\chi-\bs u\times\bs\chi,
\quad
\partial_s(\bs R^T\bs\chi)=\bs R^T\mathscr D_s\bs\chi.
\end{align}

The derivatives are initially understood in the distributional sense. Introduce the force and couple resultants
\begin{align}
\bs F:=\bs n-\mathscr D_s\bs h,\quad
\bs C:=\bs m-\mathscr D_s\bs g.
\label{eq:force-couple-resultants}
\end{align}
{The work-conjugate fields $\bs n,\bs m$ need not themselves have endpoint values. The traces required by the contact virtual works are those of $\bs F,\bs C,\bs h,\bs g$.} The resultants $\bs F$ and $\bs C$ are the force and couple entering the momentum balances of Section 3.3.

\begin{prop}\label{lem:contact-regularity}
{Suppose that \eqref{eq:pvw} holds for every smooth virtual displacement,} with $(\bs r,\bs R)\in W^{2,p}$ and
$\bs m,\allowbreak\bs n,\allowbreak\bs g,\allowbreak\bs h,
\allowbreak\tilde{\bs f},\allowbreak\tilde{\bs\ell}\in L^{p'}([0,L];\bbR^3)$.
Then the distributions $\bs F,\bs C$ defined in \eqref{eq:force-couple-resultants} and the fields $\bs h,\bs g$ belong to $W^{1,p'}(I;\bbR^3)$. In particular, they have continuous representatives on $[0,L]$.
\end{prop}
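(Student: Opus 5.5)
The plan is to test the whole-rod identity \eqref{eq:pvw} only with virtual displacements $(\bs\varphi,\bs\psi)$ that are smooth and compactly supported in the open interval $I=(0,L)$. For these the endpoint terms of $\vWext$ drop out, and the identity becomes a pair of distributional equations on $I$. I will decouple them by first taking $\bs\psi=\bs 0$ and then $\bs\varphi=\bs 0$, and then bootstrap regularity in a fixed order: $\bs F$, then $\bs h$, then $\bs C$, then $\bs g$. Throughout I use that $(\bs r,\bs R)\in W^{2,p}$ gives $\bs u,\bs v\in W^{1,p}(I)\subset C^0([0,L])$, so $\bs u,\bs v$ are bounded and $\partial_s\bs v\in L^p$. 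I also use the elementary fact that a distribution on an interval whose derivative lies in $L^{p'}$ is a $W^{1,p'}$ function.

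\emph{Step 1 ($\bs\psi=\bs0$).} I rewrite the internal work in spatial form using \eqref{eq:virtual-strain-jet-map} and \eqref{eq:spatialkinetics}:
\begin{align}
\msn\cdot\dmsv+\msh\cdot\partial_s\dmsv=\bs n\cdot\partial_s\bs\varphi+\bs h\cdot\partial_s^2\bs\varphi+(\bs u\times\bs h)\cdot\partial_s\bs\varphi .
\end{align}
Since $\bs u\times\bs h\in L^{p'}$, the identity says, in $\mathcal D'(I)$,
\begin{align}
-\partial_s\bigl(\bs n+\bs u\times\bs h-\partial_s\bs h\bigr)=\tilde{\bs f}.
\end{align}
The bracket is exactly $\bs n-\mathscr D_s\bs h=\bs F$, so $\partial_s\bs F=-\tilde{\bs f}\in L^{p'}$ and hence $\bs F\in W^{1,p'}$. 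Consequently
\begin{align}
\partial_s\bs h=\bs n+\bs u\times\bs h-\bs F\in L^{p'},
\end{align}
so $\bs h\in W^{1,p'}$. In particular $\bs F$ and $\bs h$ are bounded and continuous.

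\emph{Step 2 ($\bs\varphi=\bs0$).} The $\bs h$-terms now contain $\partial_s\bs v\times\bs\psi$, and $\partial_s\bs v$ is only in $L^p$. To avoid this I will group them as
\begin{align}
\bs n\cdot(\bs v\times\bs\psi)+\bs h\cdot\mathscr D_s(\bs v\times\bs\psi),
\end{align}
which is legitimate because $\partial_s(\bs v\times\bs\psi)=\partial_s\bs v\times\bs\psi+\bs v\times\partial_s\bs\psi$. Now $\bs h\in W^{1,p'}$ and $\bs v\times\bs\psi\in W^{1,p}$ with compact support, and $\bs u\times$ is skew. So I can integrate by parts, $\int\bs h\cdot\mathscr D_s\bs\chi=-\int\mathscr D_s\bs h\cdot\bs\chi$, and these terms collapse to $\int \bs F\cdot(\bs v\times\bs\psi)\,ds$. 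In the same way the $\msm,\msg$ terms read
\begin{align}
\bs m\cdot\partial_s\bs\psi+\bs g\cdot\mathscr D_s(\partial_s\bs\psi).
\end{align}
Here $\bs g$ is only $L^{p'}$, so this is read distributionally. It yields
\begin{align}
-\partial_s\bs C+\bs v\times\bs F=\tilde{\bs\ell}\quad\text{in }\mathcal D'(I),
\end{align}
with $\bs C=\bs m-\mathscr D_s\bs g$. Since $\bs v\times\bs F\in L^\infty$, we get $\partial_s\bs C\in L^{p'}$ and hence $\bs C\in W^{1,p'}$. Then
\begin{align}
\partial_s\bs g=\bs m+\bs u\times\bs g-\bs C\in L^{p'},
\end{align}
so $\bs g\in W^{1,p'}$. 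The continuous representatives then come from the embedding $W^{1,p'}(I)\subset C^0([0,L])$.

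The main obstacle is Step 2. Naively, the term $(\bs h\times\partial_s\bs v)\cdot\bs\psi$ is only $L^1$ unless $p\ge p'$, so it seems to spoil $L^{p'}$ control of $\partial_s\bs C$. The remedy is the order of the bootstrap: obtain $\bs h\in W^{1,p'}$ from the $\bs\varphi$-equation first, then absorb that term into $\bs F$ via the adjointness of $\mathscr D_s$. This leaves only the bounded product $\bs v\times\bs F$. I would also check carefully that each distributional manipulation pairs only $L^{p'}$ or $W^{1,p'}$ objects against $W^{1,p}\cap C^0_c$ fields, which is where the continuity of $\bs u$ and $\bs v$ is needed.
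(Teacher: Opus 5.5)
Your proposal is correct and matches the paper's proof step for step. You test with compactly supported $(\bs\varphi,\bs 0)$ to get $\partial_s\bs F=-\tilde{\bs f}$ and then $\bs h\in W^{1,p'}$; only afterwards do you test with $(\bs 0,\bs\psi)$, using the new regularity of $\bs h$ to integrate the $\bs h$-terms by parts so they collapse to $\bs F$, which yields first $\bs C$ and then $\bs g$ in $W^{1,p'}$. There is one harmless sign slip: since $\int\bs F\cdot(\bs v\times\bs\psi)\,ds=-\int(\bs v\times\bs F)\cdot\bs\psi\,ds$, the distributional equation should read $-\partial_s\bs C-\bs v\times\bs F=\tilde{\bs\ell}$, as in the paper, and this does not affect the regularity conclusion.
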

\begin{proof}
The configuration regularity implies that $\bs u,\bs v$ are bounded and belong to $W^{1,p}$. Testing \eqref{eq:pvw} with $(\bs\varphi,\bs0)$, where $\bs\varphi\in C_c^\infty((0,L);\bbR^3)$, gives
\[
\int_0^L\left[
\bs h\cdot\partial_s^2\bs\varphi+
(\bs n+\bs u\times\bs h)\cdot\partial_s\bs\varphi
\right]\,ds
=\int_0^L\tilde{\bs f}\cdot\bs\varphi\,ds.
\]
Thus $\partial_s\bs F=-\tilde{\bs f}$ distributionally. A distribution on an interval whose derivative belongs to $L^{p'}$ is an absolutely continuous primitive of that derivative plus a constant, so $\bs F\in W^{1,p'}$. Moreover,
\[
\partial_s\bs h=\bs n+\bs u\times\bs h-\bs F\in L^{p'},
\]
which proves $\bs h\in W^{1,p'}$.

Next take $\bs\varphi=\bs0$ and $\bs\psi\in C_c^\infty((0,L);\bbR^3)$ to parameterize the {virtual displacement}. The regularity just obtained permits integration by parts in the terms involving $\bs h$ and reduces this identity to
\[
\int_I\left[
\bs g\cdot\partial_s^2\bs\psi+
(\bs m+\bs u\times\bs g)\cdot\partial_s\bs\psi
\right]\,ds
=\int_I(\tilde{\bs\ell}+\bs v\times\bs F)\cdot\bs\psi\,ds.
\]
Consequently, $\partial_s\bs C=-\tilde{\bs\ell}-\bs v\times\bs F\in L^{p'}$, so $\bs C\in W^{1,p'}$. Finally,
\[
\partial_s\bs g=\bs m+\bs u\times\bs g-\bs C\in L^{p'},
\]
and hence $\bs g\in W^{1,p'}$.
\end{proof}

With the regularity established in Proposition~\ref{lem:contact-regularity}, using \eqref{eq:dotu},\eqref{eq:dotv} and integrating by parts gives, on every segment $[a,b]$ and for every $W^{2,p}$ {virtual displacement},
\begin{align}
\vWint([a,b],(\bs\varphi,\bs\psi\times))
&=-\int_a^b\Bigl[\partial_s(\bs m-\mathscr D_s\bs g)
+\bs v\times(\bs n-\mathscr D_s\bs h)\Bigr]\cdot\bs\psi\,ds-\int_a^b\partial_s(\bs n-\mathscr D_s\bs h)\cdot\bs\varphi\,ds \notag\\
&\quad+\left[
\bs g\cdot\partial_s\bs\psi
+\bs h\cdot\partial_s\bs\varphi
+(\bs m-\mathscr D_s\bs g-\bs v\times\bs h)\cdot\bs\psi
+(\bs n-\mathscr D_s\bs h)\cdot\bs\varphi
\right]_a^b.
\label{eq:wint-localized}
\end{align}
For fields with this regularity, \eqref{eq:pvw} holds if and only if the bulk equations
\begin{align}
\begin{split}
\partial_s(\bs m-\mathscr D_s\bs g)
+\bs v\times(\bs n-\mathscr D_s\bs h)+\tilde{\bs\ell}&=\bs0,\\
\partial_s(\bs n-\mathscr D_s\bs h)+\tilde{\bs f}&=\bs0
\end{split}\label{eq:fieldequations0}
\end{align}
hold almost everywhere in $[0,L]$, and the prescribed physical-end coefficients equal the traces of the indicated continuous resultants, for $i=0,L$:
\begin{gather}
\bs M_i=\left.\bs g\right|_{s=i},
\quad
\bs N_i=\left.\bs h\right|_{s=i},
\label{eq:contactdoubleforce}\\
\bs m_i=\left.(\bs m-\mathscr D_s\bs g-\bs v\times\bs h)\right|_{s=i},
\label{eq:contactmoments}\\
\bs n_i=\left.(\bs n-\mathscr D_s\bs h)\right|_{s=i}.
\label{eq:contactforces}
\end{gather}
At each internal cut, define the contact coefficients by the same formulas \eqref{eq:contactdoubleforce}, \eqref{eq:contactmoments}, \eqref{eq:contactforces}, evaluated using the continuous representatives above. These coefficients are uniquely determined by the specified internal virtual-work density and the fixed outward-orientation convention. Substituting them and the bulk equations into \eqref{eq:wint-localized} gives the \textbf{local balance of virtual work}
\begin{align}
	\vWint([a,b], (\bs \varphi, \bs \psi \times)) = \vWext([a,b], (\bs \varphi, \bs \psi \times)). \label{eq:pvw_local}
\end{align}
for every $[a,b]\subseteq[0,L]$ and every virtual displacement in $W^{2,p}([a,b];\bbR^3\times\mathrm{Skw})$. Thus \eqref{eq:pvw_local} is a consequence of the whole-rod principle; no separate subinterval balance axiom has been imposed.

For simplicity, we adopt the classical inertial terms of special Cosserat rod theory. This choice is motivated by models of pantographic chains whose constituent bars undergo small deflections \cite{giorgio2017dynamics}. More general inertial terms may be derived within Germain's virtual-power framework \cite{Germain1973} by allowing the kinetic-energy density to depend additionally on material-coordinate derivatives of the centerline velocity and the cross-sectional angular velocity. We leave this extension for future work. With the classical inertial terms, the dynamical equations follow by decomposing the distributed powers into applied and inertial parts, as in the standard theory of special Cosserat rods (see Chapter 8 of \cite{antman05}):
\begin{align}
\begin{split}
	\tilde{\bs f} &= -\rho A \p_t^2 \bs r - \rho I_\alpha \p_t^2 \bs d_\alpha + \bs f, \\
	\tilde{\bs \ell} &= -\rho I_\alpha \bs d_\alpha \times \p_t^2 \bs r - \partial_t (\rho \bs J \bs w) + \bs \ell. 
\end{split}\label{eq:tildefell}
\end{align}
Here $\rho A(s)$ is the mass per unit reference length, and $\rho I_\alpha(s)$ are the generalized first moments of mass of the cross section. If $\bs e_1$ and $\bs e_2$ are aligned with the principal axes of inertia of the undeformed cross section, then
\begin{align}
\rho \bs J = I_{22} \bs d_1 \otimes \bs d_1 + I_{11} \bs d_2 \otimes \bs d_2 + (I_{11} + I_{22}) \bs d_3 \otimes \bs d_3,
\end{align}
where $I_{\alpha\alpha}(s)$ are the prescribed second mass moments of the cross section.

Substituting \eqref{eq:tildefell} into \eqref{eq:fieldequations0} gives the governing equations on $[0,L]$:
\begin{align}
\begin{split}
\partial_s(\bs m-\mathscr D_s\bs g)
+\bs v\times(\bs n-\mathscr D_s\bs h)+\bs\ell
&=\rho I_\alpha\bs d_\alpha\times\partial_t^2\bs r
+\partial_t(\rho\bs J\bs w),\\
\partial_s(\bs n-\mathscr D_s\bs h)+\bs f
&=\rho A\partial_t^2\bs r+\rho I_\alpha\partial_t^2\bs d_\alpha.
\end{split}\label{eq:fieldequations}
\end{align}
In the static hyperelastic case, with a stored energy obtained by three-dimensional reduction, these equations and boundary conditions reduce to equations (24)--(27) of \cite{Gupta2024}. This agreement provides an independent check on our direct intrinsic derivation.

\subsection{Balance laws}
Following Piola (see \cite{dellIsola2014b,Truesdell1960,Truesdell1965,RodriguezdellIsola2026}), we recover the classical momentum balances by testing the localized virtual-work identity with rigid virtual displacements.

We recall that an infinitesimal translation has virtual displacement $(\bs c_0,\bs0)$, with constant $\bs c_0\in\bbR^3$. Hence \eqref{eq:pvw_local} gives
\begin{align}
\vWext([a,b], (\bs c_0, \bs 0)) = 0. \label{eq:trans}
\end{align}
Because $\bs c_0$ is arbitrary, \eqref{eq:trans} is equivalent to
\begin{align}
\int_a^b\tilde{\bs f}\,ds
+\left[\bs F\right]_a^b=\bs0.
\end{align}
Using \eqref{eq:tildefell}, we obtain the \emph{balance of linear momentum}
\begin{align}
\frac{d}{dt}\int_a^b\bigl(\rho A\partial_t\bs r+\rho I_\alpha\partial_t\bs d_\alpha\bigr)\,ds
=\int_a^b\bs f\,ds
+\left[\bs F\right]_a^b.
\label{eq:balancelinmom}
\end{align}
Thus $\bs f$ is the applied body force per unit reference length, while $[\bs F]_a^b$ is the net contact force exerted on $[a,b]$ by the remainder of the rod.

Next, we recall an infinitesimal rotation
has virtual displacement $(\bs d\times(\bs r-\bs o),\bs d\times)$ for a fixed axial vector $\bs d \in \mathbb R^3$. Therefore
\begin{align}
\vWext([a,b],(\bs d\times(\bs r-\bs o),\bs d\times))=0.
\label{eq:rots}
\end{align}
Expanding the endpoint terms in \eqref{eq:rots} and using the arbitrariness of $\bs d$ gives
\begin{align}
\int_a^b\bigl(\tilde{\bs\ell}+(\bs r-\bs o)\times\tilde{\bs f}\bigr)\,ds
&+\bs m_b-\bs m_a
+(\bs r(b)-\bs o)\times\bs n_b-(\bs r(a)-\bs o)\times\bs n_a\notag\\
&+\partial_s\bs r(b)\times\bs N_b-\partial_s\bs r(a)\times\bs N_a=\bs0.
\end{align}
Substitution of \eqref{eq:contactdoubleforce}, \eqref{eq:contactmoments}, \eqref{eq:contactforces} makes the two terms involving $\bs v\times\bs h$ cancel. The result is the compact segment balance
\begin{align}
\int_a^b\bigl(\tilde{\bs\ell}+(\bs r-\bs o)\times\tilde{\bs f}\bigr)\,ds
+\left[(\bs r-\bs o)\times\bs F+\bs C\right]_a^b=\bs0.
\label{eq:balanceangmom0}
\end{align}
Finally, inserting the inertial decomposition \eqref{eq:tildefell} and collecting the time derivatives yields the \emph{balance of angular momentum}:
\begin{align}
\frac{d}{dt}\int_a^b\Bigl[
(\bs r-\bs o)\times\bigl(\rho A\partial_t\bs r&+\rho I_\alpha\partial_t\bs d_\alpha\bigr)
+\rho I_\alpha\bs d_\alpha\times\partial_t\bs r
+\rho\bs J\bs w
\Bigr]ds \\
&=\int_a^b\bigl((\bs r-\bs o)\times\bs f+\bs\ell\bigr)\,ds
+\left[(\bs r-\bs o)\times\bs F+\bs C\right]_a^b.
\end{align}
Thus $\bs\ell$ is the applied body couple per unit reference length, while $[\bs C]_a^b$ is the net contact couple.

\subsection{The theory for internal constraints} 
We now formulate the theory for the inextensibility and unshearability constraint
\begin{align}
	\msv = \bs e_3. \label{eq:constraint}
\end{align}
The simple modifications needed for unshearability alone, $v_1=v_2=0$, including retaining the axial active virtual work, are left to the reader. As in the previous subsections, we impose no boundary conditions on the kinematic fields in this subsection. This point is important below: prescribed endpoint values would change the domain and possibly the range of the linearized constraint operator.

Fix $p\in[1,\infty)$ and define
\begin{gather}
	\Phi : W^{2,p}([0,L];\bbE^3\times\mathrm{Orth}^+)
	\rar W^{1,p}([0,L];\bbR^3),\\
	\Phi(\bs r,\bs R):=\bs R^T\partial_s\bs r-\bs e_3=\msv-\bs e_3.
\end{gather}
For a virtual displacement $(\bs\varphi,\bs\psi\times)$, formulas \eqref{eq:virtvel} and \eqref{eq:dotv} give the differential
\begin{align}
	D\Phi_{(\bs r,\bs R)}(\bs\varphi,\bs\psi\times)
	=\dmsv
	=\bs R^T(\partial_s\bs\varphi+\bs v\times\bs\psi).
	\label{eq:linearized_constraint}
\end{align}
This derivative is surjective. Indeed, given any $\bs\eta\in W^{1,p}([0,L];\bbR^3)$, set $\bs\psi=\bs0$ and
\begin{align}
	\bs\varphi_{\bs\eta}(s):=\int_0^s\bs R(\xi)\bs\eta(\xi)\,d\xi.
\end{align}
Then $\bs\varphi_{\bs\eta}\in W^{2,p}([0,L];\bbR^3)$ and
$D\Phi_{(\bs r,\bs R)}(\bs\varphi_{\bs\eta},\bs0)=\bs\eta$. The map
$\bs\eta\mapsto(\bs\varphi_{\bs\eta},\bs0)$ is a bounded right inverse of $D\Phi_{(\bs r,\bs R)}$. Indeed, $\bs R \in W^{2,p}$ implies that $\bs R \in W^{1,\infty}$ and
\[
\|\bs\varphi_{\bs\eta}\|_{W^{2,p}}
\le C(L,\|\bs R\|_{W^{1,\infty}})\|\bs\eta\|_{W^{1,p}}.
\]
One-dimensional Sobolev product estimates also prove that $\Phi$ is smooth.
Thus $\Phi$ is a smooth submersion and
\begin{align}
	\cl M:=\Phi^{-1}(\{\bs0\})
	=\{(\bs r,\bs R)\in W^{2,p}([0,L];\bbE^3\times\mathrm{Orth}^+)\mid\msv=\bs e_3\}
\end{align}
is a smooth regular Banach submanifold. Its tangent space is
\begin{align}
	T_{(\bs r,\bs R)}\cl M
	=\left\{(\bs\varphi,\bs\psi\times)\in
	W^{2,p}([0,L];\bbR^3\times\mathrm{Skw})\mid\dmsv=\bs0\right\}.
\end{align}

Let $[a,b]\subseteq[0,L]$. As in Section 3.1, the active internal virtual work on $[a,b]$ is
\begin{align}
	\vWinta([a,b],(\bs\varphi,\bs\psi\times))
	:=\int_a^b\left(\msm\cdot\dmsu
	+\msg\cdot\partial_s\dmsu\right)\,ds,
	\label{eq:wintconstraint}
\end{align}
where $\msm,\msg\in L^{p'}([0,L];\bbR^3)$ are fixed with $\frac{1}{p} + \frac{1}{p'} = 1$. The \textbf{constrained principle of virtual work} requires
\begin{align}
	\vWinta([0,L],(\bs\varphi,\bs\psi\times))
	=\vWext([0,L],(\bs\varphi,\bs\psi\times))
	\label{eq:pvw2}
\end{align}
for every $(\bs\varphi,\bs\psi\times)$ satisfying $\dmsv=\bs0$ on $[0,L]$.

We use the following elementary multiplier result.
\begin{prop}\label{t:lagrangemultthm}
	Let $X$ and $Y$ be Banach spaces over $\bbK=\bbR$ or $\bbC$. Let $B\in X^*$ and let $A:X\rar Y$ be a bounded surjective linear operator. If $Bz=0$ for every $z\in\ker A$, then there exists a unique $\Lambda\in Y^*$ such that
	\begin{align}
		Bx=\Lambda(Ax),\quad x\in X.
	\end{align}
\end{prop}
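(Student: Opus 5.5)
The plan is to factor $B$ through the quotient $X/\ker A$ and then use the open mapping theorem to identify this quotient with $Y$. Since $A$ is bounded, $\ker A$ is a closed subspace, so $X/\ker A$ is a Banach space under the quotient norm $\|[x]\|:=\inf_{z\in\ker A}\|x+z\|$. Define $\Lambda$ on $Y$ as follows: for $y\in Y$, choose any $x$ with $Ax=y$, which is possible because $A$ is surjective, and set $\Lambda y:=Bx$. This is well defined, because if $Ax=Ax'$ then $x-x'\in\ker A$, so $B(x-x')=0$ by hypothesis. Linearity of $\Lambda$ follows from linearity of $A$ and $B$: if $Ax_1=y_1$ and $Ax_2=y_2$, then $A(\alpha x_1+\beta x_2)=\alpha y_1+\beta y_2$. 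By construction, $\Lambda(Ax)=Bx$ for every $x\in X$.

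The only nontrivial step is boundedness of $\Lambda$, and for this we invoke the open mapping theorem. Since $A$ is a bounded surjection between Banach spaces, there is a constant $C>0$ such that every $y\in Y$ has a preimage $x$ with $\|x\|\le C\|y\|$. Equivalently, the induced map $\tilde A:X/\ker A\to Y$ is a bijective bounded operator and hence has bounded inverse. With such a preimage,
\[
|\Lambda y|=|Bx|\le\|B\|\,\|x\|\le C\|B\|\,\|y\|,
\]
so $\Lambda\in Y^*$. An alternative route is to observe that $B$ descends to a bounded functional $\tilde B$ on $X/\ker A$ satisfying $|\tilde B[x]|\le\|B\|\,\|[x]\|$, because $B$ vanishes on $\ker A$. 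One then sets $\Lambda:=\tilde B\circ\tilde A^{-1}$.

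Uniqueness is immediate. If $\Lambda_1$ and $\Lambda_2$ both satisfy $\Lambda_i(Ax)=Bx$ for all $x\in X$, then $(\Lambda_1-\Lambda_2)(Ax)=0$ for all $x$, and surjectivity of $A$ gives $\Lambda_1=\Lambda_2$ on all of $Y$.

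Nothing in the argument depends on whether $\mathbb K=\bbR$ or $\bbC$. The main obstacle is only the continuity of $\Lambda$, which requires completeness of $Y$ through the open mapping theorem; this is why the linearized-constraint analysis above works in the Banach space $W^{1,p}$ with a surjective differential. In the present application this step could even be avoided, since the explicit bounded right inverse $\bs\eta\mapsto(\bs\varphi_{\bs\eta},\bs0)$ of $D\Phi_{(\bs r,\bs R)}$ already supplies the preimage bound directly.
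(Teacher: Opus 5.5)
Your proof is correct. It differs from the paper's mainly in which theorem carries the weight.

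The paper argues dually. Surjectivity of $A$ and the closed range theorem give $\operatorname{Range}(A^*)=(\ker A)^\perp$, so $B\in(\ker A)^\perp$ yields $B=A^*\Lambda$ at once. Uniqueness then follows from $\ker A^*=(\operatorname{Range}A)^\perp=\{0\}$.

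You instead construct $\Lambda$ explicitly by factoring $B$ through $A$, and you obtain continuity directly from the open mapping theorem. This is the standard proof of the relevant part of the closed range theorem, specialized to the surjective case, where no Hahn--Banach extension from $\operatorname{Range}A$ to $Y$ is needed.

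The paper's version is a one-line citation phrased in the adjoint form $B=A^*\Lambda$, which suits the Lagrange-multiplier reading. Yours is self-contained, shows exactly where completeness enters, and gives the bound $\|\Lambda\|_{Y^*}\le C\|B\|_{X^*}$ with $C$ the open-mapping constant.

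Your closing observation is a real simplification for the application. With the bounded right inverse $S\bs\eta:=(\bs\varphi_{\bs\eta},\bs0)$ of $A=\dmsv$, set $\Lambda:=B\circ S$. Since $x-SAx\in\ker A$, this works with no open mapping argument at all, and it gives $\|\Lambda\|\le\|S\|\,\|B\|$.

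One small correction to that remark: the open mapping theorem uses completeness of $X$ as well as of $Y$, and the statement fails without it. For a counterexample, give $Y$ the norm $\|y\|_Y+|f(y)|$ with $f$ a discontinuous linear functional, call the result $X$, and take $A=$ identity and $B=f$.
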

\begin{proof}
	Since $A$ is surjective, the closed range theorem gives
	$\operatorname{Range}(A^*)=(\ker A)^\perp$. Hence $B=A^*\Lambda$ for some $\Lambda\in Y^*$. Moreover,
	$\ker A^*=(\operatorname{Range}A)^\perp=\{0\}$, so $\Lambda$ is unique.
\end{proof}

We set
\begin{gather}
	X:=W^{2,p}([0,L];\bbR^3\times\mathrm{Skw}), \quad
	Y:=W^{1,p}([0,L];\bbR^3),\\
	A(\bs\varphi,\bs\psi\times):=\dmsv, \quad
	B(\bs\varphi,\bs\psi\times):=\vWext([0,L],(\bs\varphi,\bs\psi\times))
	-\vWinta([0,L],(\bs\varphi,\bs\psi\times)).
\end{gather}
$B$ is continuous on $X$, and \eqref{eq:pvw2} implies that $B$ vanishes on $\ker A$. Thus, there exists a unique $\Lambda \in Y^*$ such that $B = \Lambda A$. 

By the representation of $(W^{1,p})^*$ (see Remark 21 of Section 8.3 of \cite{Brezis2011}), there exist global fields $\bs\lambda_1,\bs\lambda_2\in L^{p'}([0,L];\bbR^3)$ such that
\begin{align}
	B(\bs\varphi,\bs\psi\times)
	=\Lambda(\dmsv)
	=\int_0^L \left(\bs\lambda_1\cdot\dmsv
	+\bs\lambda_2\cdot\partial_s\dmsv\right)\,ds.
	\label{eq:multiplier_representation}
\end{align}
Relabeling
${\bs\lambda}_1=\msn$ and ${\bs\lambda}_2=\msh$, \eqref{eq:multiplier_representation} becomes
\begin{align}
	\vWinta([0,L],(\bs\varphi,\bs\psi\times))
	+\int_0^L\left(\msn\cdot\dmsv
	+\msh\cdot\partial_s\dmsv\right)\,ds=	\vWext([0,L],(\bs\varphi,\bs\psi\times)).
	\label{eq:constrained_full_work}
\end{align}

The identity \eqref{eq:constrained_full_work} holds for every virtual displacement in $X$, not only for tangent displacements satisfying $\dmsv=\bs0$, and is therefore formally equivalent to \eqref{eq:pvw}. Proposition~\ref{lem:contact-regularity} supplies the regularity of $\bs F,\bs C,\bs h,\bs g$ for any representing pair obtained above. Consequently, the bulk equations \eqref{eq:fieldequations0} hold almost everywhere, and \eqref{eq:fieldequations} follows with the inertial decomposition used there. The traces in \eqref{eq:contactdoubleforce}, \eqref{eq:contactmoments}, \eqref{eq:contactforces} reproduce the prescribed physical-end interactions and define induced internal-cut interactions for which the local identity \eqref{eq:pvw_local} holds. Since both $\dmsv$ and $\partial_s\dmsv$ vanish on admissible virtual displacements, the fields $\msn$ and $\msh$ are reactive: their virtual work vanishes on such displacements, and \textit{they are not prescribed by constitutive relations} on the constraint manifold.

We now discuss a \textit{striking difference} between the strain-gradient theory and the classical theory of Cosserat rods: for fixed active virtual work and external data, $\Lambda$ is unique, but the pair $(\msn,\msh)$ representing it is not. For any
$\bs\zeta\in W^{1,p'}([0,L];\bbR^3)$ with $\bs\zeta(0)=\bs\zeta(L)=\bs0$, the transformation
\begin{align}
	\msn^\#=\msn+\partial_s\bs\zeta,\quad
	\msh^\#=\msh+\bs\zeta
	\label{eq:reaction-gauge}
\end{align}
leaves \eqref{eq:constrained_full_work} unchanged, because
\[
\int_0^L\left(\partial_s\bs\zeta\cdot\bs\eta+
\bs\zeta\cdot\partial_s\bs\eta\right)\,ds
=[\bs\zeta\cdot\bs\eta]_0^L=0
\quad\text{for every }\bs\eta\in W^{1,p}.
\]
Conversely, two $L^{p'}$ pairs represent the same functional on $W^{1,p}$ only if their difference has this form: compactly supported tests give
$\partial_s(\msh^\#-\msh)=\msn^\#-\msn$ distributionally, and then independently prescribable endpoint traces force $\msh^\#-\msh$ to vanish at $0,L$. This freedom of representation is \textit{not present in the classical Cosserat theory}.

On a proper subinterval, the change in the reactive virtual work is instead
\begin{align}
	\int_a^b\left[
	(\msn^\#-\msn)\cdot\dmsv
	+(\msh^\#-\msh)\cdot\partial_s\dmsv
	\right]\,ds
	=[\bs\zeta\cdot\dmsv]_a^b.
	\label{eq:reaction-gauge-segment}
\end{align}
For fixed active fields $\msm,\msg$, the spatial fields change by
$\bs n^\#-\bs n=\bs R\partial_s\bs\zeta$ and
$\bs h^\#-\bs h=\bs R\bs\zeta$. Since
$\mathscr D_s(\bs R\bs\zeta)=\bs R\partial_s\bs\zeta$, both resultants $\bs F,\bs C$ remain unchanged, but the induced contact coefficients (including contact double-forces and double-couples) at a cut $i$ change according to
\begin{gather}
	\bs N_i^\#-\bs N_i=\bs R(i)\bs\zeta(i), \quad 
	\bs m_i^\#-\bs m_i=-\bs v(i)\times\bs R(i)\bs\zeta(i),\notag\\
	\bs M_i^\#-\bs M_i=\bs0, \quad 
	\bs n_i^\#-\bs n_i=\bs0.
	\label{eq:reaction-gauge-contacts}
\end{gather}
Thus, each representative pair $(\msn, \msh)$ for $\Lambda$ yields a compatible local virtual-work identity with its own induced cut interactions. The physical-end interactions are unchanged, and all representatives give the same virtual work on constrained virtual displacements.

Accordingly, the global constrained principle uniquely determines the multiplier functional $\Lambda$. Equivalently, it determines the gauge-invariant bulk combination
\[
\msn-\partial_s\msh=\bs R^T\bs F
\]
together with the endpoint traces of $\msh$. It does not, however, determine the fields $\msn$ and $\msh$ separately, nor does it select a unique extension of the constrained virtual work to arbitrary virtual displacements on each internal segment. Selecting a particular representative pair $(\msn,\msh)$ therefore requires an additional modeling or normalization convention (one such convention will be suggested in Section 4.2).

\section{Constitutive theory}

This section develops constitutive relations that close the field equations derived in Section 3. We first show that objectivity reduces the constitutive dependence to the histories of the material strains and their spatial derivatives, and then define the corresponding elastic, hyperelastic, and viscoelastic specializations. For inextensible and unshearable rods, we distinguish the active constitutive response from the reactive force and double-force fields, whose representation possesses a gauge freedom, and use a heuristic penalty argument to select a natural three-parameter family of representatives. Finally, we formulate transverse hemitropy, transverse isotropy, and flip symmetry for strain-gradient rods and use invariant theory to represent symmetric stored-energy densities in terms of polynomial integrity bases.

\subsection{Objectivity}
We first discuss constitutive relations that close the field equations of the form: 
\begin{align}
	\bs m(s,t) &= \tilde{\bs m}\bigl (\bs r^t(s, \cdot), \p_s \bs r^t(s, \cdot), \p_s^2 \bs r^t(s, \cdot), \bs d_\alpha^t(s, \cdot), \p_s \bs d_\alpha^t(s, \cdot), \p_s^2 \bs d_\alpha^t(s, \cdot), s\bigr ), \label{eq:mconst}\\
	\bs n(s,t) &= \tilde{\bs n}\bigl (\bs r^t(s, \cdot), \p_s \bs r^t(s, \cdot), \p_s^2 \bs r^t(s, \cdot), \bs d_\alpha^t(s, \cdot), \p_s \bs d_\alpha^t(s, \cdot), \p_s^2 \bs d_\alpha^t(s, \cdot), s\bigr ), \label{eq:nconst}\\
	\bs g(s,t) &= \tilde{\bs g}\bigl (\bs r^t(s, \cdot), \p_s \bs r^t(s, \cdot), \p_s^2 \bs r^t(s, \cdot), \bs d_\alpha^t(s, \cdot), \p_s \bs d_\alpha^t(s, \cdot), \p_s^2 \bs d_\alpha^t(s, \cdot), s\bigr ), \label{eq:gconst} \\
	\bs h(s,t) &= \tilde{\bs h}\bigl (\bs r^t(s, \cdot), \p_s \bs r^t(s, \cdot), \p_s^2 \bs r^t(s, \cdot), \bs d_\alpha^t(s, \cdot), \p_s \bs d_\alpha^t(s, \cdot), \p_s^2 \bs d_\alpha^t(s, \cdot), s\bigr ). \label{eq:hconst} 
\end{align}
where $\bs m$, $\bs n$, $\bs g$, and $\bs h$ are defined in \eqref{eq:spatialkinetics}, and for a vector field $\bs a: [0,L] \times [0,T] \rar \bbR^3$, $\bs a^t(s, \cdot) := \{ \bs a(s,\tau) \mid \tau \in [0,t] \}$ is its past history.

We say that a vector-valued constitutive function
\begin{align}
	\tilde{\bs a}
	=
	\tilde{\bs a}
	\bigl(
	\bs r^t,
	\partial_s \bs r^t,
	\partial_s^2 \bs r^t,
	\bs d_\alpha^t,
	\partial_s \bs d_\alpha^t,
	\partial_s^2 \bs d_\alpha^t,
	s
	\bigr)
\end{align}
is \textit{objective} if, under every change of frame,
\begin{align}
	(\bs r^*(\cdot,t),\bs R^*(\cdot,t))
	=
	\bigl(\bs o + 
	\bs c(t)+\bs Q(t)(\bs r(\cdot,t)-\bs o),
	\bs Q(t)\bs R(\cdot,t)
	\bigr),
	\label{eq:change_of_frame2}
\end{align}
with $\bs Q(t)\in\mathrm{Orth}^+$ and $\bs c(t)\in\bbR^3$, it transforms according to
\begin{align}
\label{eq:transformation_law}
	\tilde{\bs a}
	\bigl(
	(\bs r^*)^t,
	\partial_s (\bs r^*)^t,
	\partial_s^2 (\bs r^*)^t,
	(\bs d^*)_\alpha^t,
	\partial_s (\bs d^*)_\alpha^t,
	\partial_s^2 (\bs d^*)_\alpha^t,
	s
	\bigr)
	=
	\bs Q\,
	\tilde{\bs a}
	\bigl(
	\bs r^t,
	\partial_s \bs r^t,
	\partial_s^2 \bs r^t,
	\bs d_\alpha^t,
	\partial_s \bs d_\alpha^t,
	\partial_s^2 \bs d_\alpha^t,
	s
	\bigr).
\end{align}
The following proof is modeled on that of \cite{antman05}, Chapter 8. 
\begin{prop}
The most general objective constitutive relations of the form \eqref{eq:mconst}-\eqref{eq:hconst} are 
\begin{align}
\label{eq:objective_constitutive_equations}
\msm(s,t) &= \hat{\msm}\bigl (\msu^t(s, \cdot), \msv^t(s, \cdot), \p_s \msu^t(s, \cdot), \p_s \msv^t(s, \cdot), s\bigr ), \\
\msn(s,t) &= \hat{\msn} \bigl (\msu^t(s, \cdot), \msv^t(s, \cdot), \p_s \msu^t(s, \cdot), \p_s \msv^t(s, \cdot), s\bigr ), \\
\msg(s,t) &= \hat{\msg} \bigl (\msu^t(s, \cdot), \msv^t(s, \cdot), \p_s \msu^t(s, \cdot), \p_s \msv^t(s, \cdot), s\bigr ), \\
\msh(s,t) &= \hat{\msh}\bigl (\msu^t(s, \cdot), \msv^t(s, \cdot), \p_s \msu^t(s, \cdot), \p_s \msv^t(s, \cdot), s\bigr ). 
\end{align}
\end{prop}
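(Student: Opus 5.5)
The argument follows Antman's classical reduction (Ch.~8 of \cite{antman05}). It uses a special choice of the time-dependent change of frame \eqref{eq:change_of_frame2} that brings every history to a canonical form. The first step is to exploit translations. Take $\bs Q(t)\equiv\bs I$ and $\bs c(t)=-(\bs r(s,t)-\bs o)$ at the fixed material point $s$. The transformed history $(\bs r^*)^t(s,\cdot)$ is then identically $\bs o$, while all derivative histories are unchanged. The transformation law \eqref{eq:transformation_law} with $\bs Q=\bs I$ therefore shows that $\tilde{\bs a}$ is independent of its first argument. We may drop $\bs r^t$ and retain $\partial_s\bs r^t,\partial_s^2\bs r^t,\bs d_\alpha^t,\partial_s\bs d_\alpha^t,\partial_s^2\bs d_\alpha^t$.

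The second step is to exploit rotations. Choose $\bs Q(\tau)=\bs R(s,\tau)^T$ for $\tau\in[0,t]$; this is legitimate because the law is required for every time-dependent rotation. Then $\bs d_k^*(s,\tau)=\bs e_k$, so the director histories become constant. The remaining arguments become material representations of the kinematic data at $s$, which must be expressed through the objective strains. Write $\partial_s\bs r=\bs R\msv$ and $\partial_s\bs d_\alpha=\bs R(\msu\times\bs e_\alpha)$, using $\bs R^T(\bs u\times\bs R\bs e_\alpha)=\msu\times\bs e_\alpha$. Differentiating once more gives
\[
\partial_s^2\bs r=\bs R(\partial_s\msv+\msu\times\msv),\qquad
\partial_s^2\bs d_\alpha=\bs R\bigl(\partial_s\msu\times\bs e_\alpha+\msu\times(\msu\times\bs e_\alpha)\bigr).
\]
Hence $\bs R^T$ applied to each argument yields a fixed function of $(\msu,\msv,\partial_s\msu,\partial_s\msv)$ at $(s,\tau)$. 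Applying \eqref{eq:transformation_law} at time $t$ gives
\[
\bs R(s,t)^T\tilde{\bs a}(\dots)=\tilde{\bs a}\bigl(\bs R^T\partial_s\bs r^t,\ \bs R^T\partial_s^2\bs r^t,\ \bs e_\alpha,\ \bs R^T\partial_s\bs d_\alpha^t,\ \bs R^T\partial_s^2\bs d_\alpha^t,\ s\bigr),
\]
whose right side is a function $\hat{\bs a}(\msu^t,\msv^t,\partial_s\msu^t,\partial_s\msv^t,s)$. With $\bs a=\bs m,\bs n,\bs g,\bs h$ and \eqref{eq:spatialkinetics}, this is exactly the asserted form.

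The third step is the converse. Any relations of the stated form define objective spatial fields, since by \eqref{eq:uvobjective} the strains are unchanged by \eqref{eq:change_of_frame2}. Their $s$-derivatives are also unchanged, because $\bs Q$ and $\bs c$ do not depend on $s$. Consequently $\bs a^*=\bs Q\bs R\hat{\bs a}=\bs Q\bs a$.

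The main obstacle is the bookkeeping in the second step. One must check that the map from the frame-normalized history to $(\msu,\msv,\partial_s\msu,\partial_s\msv)$ histories is well defined and injective, so that $\hat{\bs a}$ is a genuine function of the strain histories and depends on nothing else. Injectivity follows because $\msu$ is recovered from $\bs R^T\partial_s\bs d_\alpha$ (the two relations $\msu\times\bs e_1$ and $\msu\times\bs e_2$ determine $\msu$). Then $\msv$ and $\partial_s\msu,\partial_s\msv$ are recovered successively from the displayed formulas. It must also be checked that the admissible data are exactly those arising from strains, since the directors are orthonormal and $\partial_s\bs d_\alpha$ is constrained accordingly. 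These checks are routine but require care.
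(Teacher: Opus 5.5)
Your proposal is correct and follows essentially the same route as the paper. A pure translation ($\bs Q\equiv\bs I$) removes the dependence on $\bs r^t$, and the time-dependent rotation $\bs Q(\tau)=\bs R^T(s_0,\tau)$ at a fixed material point removes the director dependence, leaving only the strain histories. The differences are cosmetic: you normalize the frame before expressing the arguments through $(\msu,\msv,\partial_s\msu,\partial_s\msv)$, whereas the paper does the reverse, and you also supply the explicit second-derivative formulas and the converse, which the paper leaves implicit. Note that defining $\hat{\bs a}$ only requires the normalized data to be an explicit function of the strains, which your displayed formulas already give; injectivity is not actually needed.
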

 \begin{proof}
   We prove the proposition for $\bs m$; the argument for the remaining variables is nearly identical. Assume that $\tilde{\bs m}$ satisfies \eqref{eq:transformation_law}. 
    Setting $\bs Q(t) = \bs I$ and $\bs c(t)$ arbitrary reveals that $\tilde {\bs m }$ does not depend on $\bs r^t$, which allows us to set
    \begin{align}
    \label{eq:objective_representation}
        \tilde{\bs m}(\bs r^t, \partial_s \bs r^t, \bs \partial_s^2 \bs r^t, \bs d_\alpha^t, \partial_s \bs d_\alpha^t, \partial_s^2 \bs d_\alpha^t) = \hat{ m}_k(\msu^t, \msv^t, \p_s \msu^t, \p_s \msv^t, \bs d^t_\alpha, s)\bs d_k, \quad \hat{m}_k = \tilde{\bs m} \cdot \bs d_k.
    \end{align}
    Substituting \eqref{eq:objective_representation} into \eqref{eq:transformation_law} yields 
    \begin{align}
        \hat{m}_k (\msu^t, \msv^t, \p_s \msu^t, \p_s \msv^t, (\bs Q \bs d_\alpha)^t, s) \bs Q \bs d_k = \hat{m}_k(\msu^t, \msv^t, \p_s \msu^t, \p_s \msv^t, \bs d_\alpha^t,s) \bs Q \bs d_k 
    \end{align}
    Fixing $s = s_0 \in [0,L]$ and setting $\bs Q(t) = \bs R^T(s_0,t)$ now implies that 
    \begin{align}
        \hat{m}_k(\msu^t, \msv^t, \p_s \msu^t, \p_s \msv^t, \bs e^t_\alpha, s_0) = \hat{m}_k(\msu^t, \msv^t, \p_s \msu^t, \p_s \msv^t, \bs d^t_\alpha, s_0), \quad s_0 \in [0, L]. 
    \end{align}
   It follows that $\hat{m}_k$ is independent of $\bs d^t_\alpha$, so that pulling back by $\bs R^T(s, t)$ yields
    \begin{align}
        \msm = \bs R^T {\bs m} = \hat{m}_k(\msu, \msv, \p_s \msu, \p_s \msv, s)\bs e_k =: \hat{\msm}(\msu^t, \msv^t, \p_s \msu^t, \p_s \msv^t, s). 
    \end{align}
\end{proof}

We define a \textit{strain-gradient elastic Cosserat rod} as one whose constitutive functions depend only on the quantities $(\msu, \msv, \p_s \msu, \p_s \msv, s)$. The rod is \emph{hyperelastic} if there exists a scalar-valued function $W = \hat{W}(\msu, \msv, \p_s \msu, \p_s \msv, s)$, called the \emph{stored energy density}, such that 
\begin{align}
    \label{eq:hyperelasticity}
    \hat{\msm} = \frac{\partial \hat W}{\p \msu}, \quad \hat{\msn} = \frac{\p \hat W}{\p \msv}, \quad
    \hat{\msg} = \frac{\p \hat W}{\p \msu_s}, \quad
    \hat{\msh} = \frac{\p \hat W }{\p \msv_s},
\end{align}
where $\bs a_s := \p_s \bs a$ going forward. A \emph{strain-gradient viscoelastic Cosserat rod of complexity $k$} is one whose constitutive functions depend on the $j$th-order time derivatives of 
$
	(\msu, \msv, \msu_s, \msv_s, s),
$
for $0 \leq j \leq k$.

\subsection{Unshearable and inextensible rods}
We now discuss the constitutive theory for unshearable and inextensible rods, leaving the minor modifications for solely unshearable rods to the reader.
In the case of the internal unshearable and inextensible constraint, 
\begin{align}
	\msv = \bs e_3, 
\end{align}
the objective constitutive relations prescribed take the form
\begin{align}
\label{eq:objective_constitutive_equations_inext}
\msm(s,t) = \hat{\msm}\bigl (\msu^t(s, \cdot), \p_s \msu^t(s, \cdot), s\bigr ), \quad
\msg(s,t) = \hat{\msg} \bigl (\msu^t(s, \cdot), \p_s \msu^t(s, \cdot), s\bigr ).
\end{align}
We define an \textit{inextensible strain-gradient elastic Cosserat rod} as one whose constitutive functions depend only on the quantities $(\msu, \p_s \msu)$. The rod is \emph{hyperelastic} if there exists a scalar-valued stored-energy function $W = \hat{W}(\msu, \p_s \msu, s)$ such that 
\begin{align}
	\label{eq:hyperelasticity_inext}
	\hat{\msm} = \frac{\partial \hat W}{\p \msu}, \quad
	\hat{\msg} = \frac{\p \hat W}{\p \msu_s}.
\end{align}

As discussed in Section 3.4, in contrast to the classical Cosserat rod theory there is a freedom of representation for the reactive virtual work: for any
$\bs\zeta\in W^{1,p'}([0,L];\bbR^3)$ with $\bs\zeta(0)=\bs\zeta(L)=\bs0$, the transformation
\begin{align}
	\msn^\#=\msn+\partial_s\bs\zeta,\quad
	\msh^\#=\msh+\bs\zeta
	\label{eq:reaction-gauge2}
\end{align}
leaves \eqref{eq:constrained_full_work} unchanged. Accordingly, only the gauge-invariant bulk combination
\[
\msn-\partial_s\msh
\]
together with the endpoint traces of $\msh$ are determined from \eqref{eq:constrained_full_work}. It does not, however, determine the fields $\msn$ and $\msh$ separately. Thus, a relation between $\msn$ and $\msh$ must be postulated to eliminate the gauge freedom.

One natural choice is to postulate that the rod does not support double forces, and therefore,
\begin{align}
	\msh = \bs 0. \label{eq:hzero}
\end{align}
Then, for an inextensible hyperelastic strain-gradient rod with stored energy $W(\msu, \msu_s, s)$ and prescribed external forces, \eqref{eq:constrained_full_work} takes the familiar stationary form
\begin{align}
	\frac{d}{d\eps}\Big|_{\eps=0}
	\int_0^L\left[
	\hat W(\msu_\eps,\partial_s\msu_\eps,s)
	+\msn\cdot(\msv_\eps-\bs e_3)
	\right]ds
	=\vWext([0,L],(\bs\varphi,\bs\psi\times)),
\end{align}
for every {virtual displacement} in $W^{2,p}$ with $\msn$ playing the role of a Lagrange multiplier. Moreover, in the case that the stored energy is independent of $\msu_s$, the theory reduces to the classical Cosserat rod theory.

The condition \eqref{eq:hzero} is the zero-length-scale member of a natural three-parameter family of gauge-fixing relations suggested by a penalty approximation. The following argument is heuristic: we assume a smooth family of penalized equilibria and finite limiting reaction fields, without proving convergence as the penalty parameter tends to zero.

Let $a_i>0$ and $b_i\geq0$ be constants, and define the diagonal material tensors
\begin{align}
	\mathsf A
	:=\sum_{i=1}^3a_i\bs e_i\otimes\bs e_i,
	\quad
	\mathsf B
	:=\sum_{i=1}^3b_i\bs e_i\otimes\bs e_i,
	\quad
	\mathsf D
	:=\mathsf B\mathsf A^{-1}
	=\sum_{i=1}^3\ell_i^2\bs e_i\otimes\bs e_i,
	\quad \ell_i^2:=\frac{b_i}{a_i}.
	\label{eq:penalty-tensors}
\end{align}
Instead of imposing $\msv=\bs e_3$ exactly, consider a slightly extensible and shearable hyperelastic rod with stored energy
\begin{align}
	\hat W_\sigma
	(\msu,\partial_s\msu,\msv,\partial_s\msv)
	:=\hat W(\msu,\partial_s\msu)
	+\frac{1}{2\sigma}
	(\msv-\bs e_3)\cdot\mathsf A(\msv-\bs e_3)
	+\frac{1}{2\sigma}
	\partial_s\msv\cdot\mathsf B\partial_s\msv.
	\label{eq:penalized-energy}
\end{align}
The fields conjugate to $\msv$ and $\partial_s\msv$ are
\begin{align}
	\msn_\sigma
	=\frac{1}{\sigma}\mathsf A(\msv_\sigma-\bs e_3), \quad 
	\msh_\sigma
	=\frac{1}{\sigma}\mathsf B\partial_s\msv_\sigma.
	\label{eq:penalized-reactions}
\end{align}
Since $\mathsf A$ and $\mathsf B$ are constant, these identities imply the exact relation
\begin{align}
	\msh_\sigma
	=\mathsf D\partial_s\msn_\sigma.
	\label{eq:penalty-reaction-relation}
\end{align}
Provided that $\hat W$ is bounded below, a uniform bound on the penalized energy formally drives $\msv_\sigma$ to $\bs e_3$ as $\sigma\to0$ and, in every component for which $b_i>0$, drives $\partial_s\msv_\sigma$ to zero. We therefore assume, as part of this formal limiting argument, that a family of penalized equilibria can be chosen for which $\msn_\sigma$ and $\msh_\sigma$ approach finite limits $\msn$ and $\msh$. Passing formally to the limit in \eqref{eq:penalty-reaction-relation} gives
\begin{align}
	\msh
	=\mathsf D\partial_s\msn.
	\label{eq:limiting-reaction-relation}
\end{align}

This relation can be expressed directly in terms of the contact force and contact double force. By \eqref{eq:force-couple-resultants}, \eqref{eq:contactdoubleforce}, and \eqref{eq:contactforces},
\begin{align}
	\bs h
	=\bs R\mathsf D\partial_s\msn, \quad 
	\bs F
	=\bs R\left(\msn-\mathsf D\partial_s^2\msn\right),
	\label{eq:penalty-contact-relations}
\end{align}
and, at every physical end or internal cut $s=i$,
\begin{align}
	\bs N_i
	=\left.\bs R\mathsf D\partial_s\msn\right|_{s=i}, \quad 
	\bs n_i
	=\left.\bs R\left(\msn-\mathsf D\partial_s^2\msn\right)\right|_{s=i}.
	\label{eq:penalty-cut-relations}
\end{align}
If every $\ell_i>0$, eliminating $\msn$ from \eqref{eq:penalty-contact-relations} gives the equivalent relation
\begin{align}
	\partial_s(\bs R^T\bs F)
	=\mathsf D^{-1}\bs R^T\bs h
	-\partial_s^2(\bs R^T\bs h).
	\label{eq:force-double-force-relation}
\end{align}
When some $\ell_i$ vanishes, the corresponding component of \eqref{eq:limiting-reaction-relation} simply gives $h_i=0$.

Finally, we see that \eqref{eq:limiting-reaction-relation} fixes the gauge. Indeed, suppose that the gauge-related pair
$(\msn+\partial_s\bs\zeta,\msh+\bs\zeta)$ also satisfies \eqref{eq:limiting-reaction-relation}, where $\bs\zeta(0)=\bs\zeta(L)=\bs0$. Subtracting the two relations gives
\begin{align}
	\bs\zeta-\mathsf D\partial_s^2\bs\zeta=\bs0.
\end{align}
Taking the inner product with $\bs\zeta$, integrating over $[0,L]$, and using the endpoint conditions yields
\begin{align}
	\int_0^L
	\left(
	|\bs\zeta|^2
	+\partial_s\bs\zeta\cdot\mathsf D\partial_s\bs\zeta
	\right)\,ds
	=0.
\end{align}
Since $\mathsf D$ is positive semidefinite, $\bs\zeta=\bs0$. Hence the three nonnegative parameters $\ell_i^2=b_i/a_i$ select a unique representative for the reactive virtual work. The special choice $\ell_1=\ell_2=\ell_3=0$ gives $\msh=\bs0$ and recovers \eqref{eq:hzero}. This conclusion should be understood as the representative selected by the particular penalty regularization \eqref{eq:penalized-energy}, not as a consequence of the inextensibility and unshearability constraint alone.

\subsection{Transverse hemitropy and isotropy}
Our discussion of material symmetry for a strain-gradient Cosserat rod extends that of classical Cosserat rods, and for simplicity of exposition, we specialize \cite{Healy2002,antman05} to strain-gradient hyperelastic Cosserat rods. We assume that the reference configuration of the rod is initially straight and aligned with $\bs e_3$ and define 
\begin{align}
\msQ_\theta = \begin{pmatrix}
    \cos \theta & -\sin \theta & 0 \\
    \sin \theta & \cos \theta & 0 \\
    0 & 0 & 1 
\end{pmatrix}, 0\leq \theta < 2\pi, \quad  \msE = \begin{pmatrix}
    1 & 0 & 0 \\
    0 & -1 & 0 \\
    0 & 0 & 1 
\end{pmatrix}
\label{eq:rotationfamily}
\end{align}
with respect to the standard basis $\{\bs e_1, \bs e_2, \bs e_3\}$. We define the \emph{reference symmetry groups} 
\begin{align}
   \cl O_c(2) := \{\msQ_\theta, \msE \msQ_{\theta} \mid 0 \leq \theta < 2\pi\}, \quad \cl R_c(2) := \{ \msQ_\theta \mid 0 \leq \theta <2\pi \}. 
\end{align}
Consider the motions described by $(\bs r(s,t), \bs R(s,t))$ and \[(\bs r(s,t), \bs R^\#(s,t)):= (\bs r, \bs R(s,t) \msT)\] where $\msT \in \mathcal{R}_c(2)$. Letting $(\msu, \msv)$ and $(\msu^\#, \msv^\#)$ denote the corresponding strains, we compute
\begin{align}
    \msv^\# &= (\bs R^\#)^T \bs v = \msT^T \msv \label{eq:strain_relations_1}\\
    \msu^\# &= \text{ax}((\bs R^\#)^T \p_s \bs R^\# )= \text{ax}(\msT^T \bs R^T\partial_s \bs R \msT) = \det(\msT) \msT^T \msu. \\
    \p_s\msv^\# & = \msT^T \msv_s =: \msv_s^\#\\
    \p_s \msu^\# &= \det(\msT) \msT^T \msu_s =: \msu_s^\# \label{eq:strain_relations_4}
\end{align}
Of course $\det(\msT) = 1$ for a rotation, but we see from this computation that if we allow $\msT=\msE$, a reflection, which is not an admissible motion for Cosserat rods, the relations \eqref{eq:strain_relations_1}-\eqref{eq:strain_relations_4} are still valid. Allowing reflections considers a rod and its ``mirror image'' to be constitutively comparable when the mirror is parallel to the rod. We define a hyperelastic special Cosserat rod to be \emph{transverse hemitropic} if, for all $\msT \in \cl R_c(2)$, we have 
\begin{align}
\label{eq:transversehemitropy}
\hat{W}(\msu^\#, \msv^\#, \msu^\#_s, \msv^\#_s,s) = \hat{W}(\msu, \msv, \msu_s, \msv_s, s). 
\end{align}
The rod is said to be \emph{transverse isotropic} if $\hat{W}$ satisfies \eqref{eq:transversehemitropy} for all $\msT \in \cl O_c(2)$, i.e., we also have
\begin{align}
    \label{eq:transverseisotropy}
   \hat{W}(-\msE \msu, \msE \msv, -\msE \msu_s, \msE \msv_s, s) = \hat{W}(\msu, \msv, \msu_s, \msv_s,s).
\end{align} This distinction separates hemitropic rods that may be chiral, such as DNA and twine, from transversely isotropic rods with reflection symmetry, such as idealized homogeneous rubber rods.

A distinct symmetry condition to \eqref{eq:transverseisotropy} is that of \emph{flip-symmetry}, which requires that {the} stored energy of the rod cannot distinguish between the motions \[(\bs r(s, t), \bs R(s,t)),  \quad  (\bs r(2p-s, t), -\bs R(2p-s, t)\msE)\] for all admissible $p$, when evaluated at $s = p$ \cite{Healy2002}. Intuitively, flipping our rod $180^\circ$ around an axis in the $\bs e_2$ cross-sectional direction and reversing the parameterization of the rod does not change its material response. See the discussion in section 6 of \cite{Healy2002} for motivation and analysis. Applying this definition to our {strain-gradient} rod, we say a rod is flip-symmetric if $\hat{W}$ satisfies
\begin{align}
    \label{eq:flipsymmetry}
    \hat{W}(\msE \msu, \msE \msv, -\msE \msu_s, -\msE \msv_s, s) = \hat{W}(\msu, \msv, \msu_s, \msv_s,s).
\end{align}
The sign change on the higher gradient terms comes from the reverse parameterization of the rod.\footnote{It should be noted that \eqref{eq:transverseisotropy} and \eqref{eq:flipsymmetry} are relations on $\hat{W}$, and do not necessarily arise from proper motions. We assume that $\hat{W}$ can have a larger domain than the admissible domain for strains representing proper motions. The action yielding transverse hemitropy \eqref{eq:transversehemitropy}, on the other hand, does preserve proper motions.}

For any \(\msa \in \mathbb R^3\), define \(\msa_\perp := \msa -(\msa \cdot \bs e_3)\bs e_3\). An example of an energy that is transverse isotropic and flip-symmetric is 

\begin{align}
	\label{eq:stored_energy}
	W(s) = \hat{W}(\msu, \msu_s, s) = \frac{1}{2}A(s)|\msu_\perp(s)|^2 + \frac{1}{2}B(s)|(\msu_s)_\perp(s)|^2 + \frac{1}{2}\xi(s)u_3^2(s) +  \frac{1}{2}\zeta(s)u_{3,s}^2, 
\end{align}
for some scalar functions \(A, B, \xi, \zeta\). This energy is natural for inextensible and unshearable rods, and will be discussed further below. A similar type of energy was derived in \cite{Gupta2024} as a result of dimensional reduction. Here, rod-material symmetry (which does not have an obvious analog in three-dimensional theories) yields our constitutive representation. The direct approach will prove indispensable in analyzing exotic rod energies, where there are no obvious three-dimensional parent energies available.  

When restricted to planar, untwisted rod deformations, \eqref{eq:stored_energy} generalizes the energy of the inextensible strain-gradient rod studied in \cite{rizzi26}. Adding a term that changes sign under the simultaneous transformation
\[
(u_3,u_{3,s})\mapsto(-u_3,-u_{3,s}),
\]
such as a term linear in either \(u_3\) or \(u_{3,s}\), preserves transverse hemitropy but breaks transverse isotropy, thereby allowing a chiral constitutive response.

When $\hat{W}$ is invariant under \eqref{eq:strain_relations_1}-\eqref{eq:strain_relations_4}, differentiation gives the following equivariance relations
\begin{align}
    \label{eq:constitutivefunctionshemitropy}
    \hat{\msm}(\msu, \msv, \msu_s, \msv_s) &= \partial_{\msu}\left[\hat{W}(\msu^\#, \msv^\#, \msu_s^\#, \msv_s^\#) \right] = \det(\msT) \msT \hat{\msm}(\msu^\#, \msv^\#, \msu_s^\#, \msv_s^\#), \\
    \hat{\msn}(\msu, \msv, \msu_s, \msv_s) &= \msT \hat{\msn}(\msu^\#, \msv^\#, \msu_s^\#, \msv_s^\#)\\
    \hat{\msg}(\msu, \msv, \msu_s, \msv_s) &= \det(\msT) \msT \hat{\msg}(\msu^\#, \msv^\#, \msu_s^\#, \msv_s^\#)\\
    \hat{\msh}(\msu, \msv, \msu_s, \msv_s) &= \msT \hat{\msh}(\msu^\#, \msv^\#, \msu_s^\#, \msv_s^\#). 
\end{align}
These relations serve as the definition of transverse hemitropy or isotropy when no stored energy exists. The corresponding equivariance relations for flip symmetry follow from the same procedure.

\subsection{Invariant representations and integrity bases of stored energy densities} 
Symmetries of the stored energy density $\hat{W}$ correspond to alternate functional representations of $\hat{W}$ where the symmetries are encoded in the domain. The determination of these representations plays a fundamental role in rod theory and mechanics as a whole \cite{Healy2002, antman05, Smith1994}. We approach this problem via an abstraction of \cite{Healy2002, antman05} using the theory of Lie groups. This allows us to use general topological results in \cite{Weyl1946, Schwarz1975} on groups acting orthogonally on $\bbR^n$ and their effect on functions invariant under these groups. The Lie group framework thus proves to be the natural context in which to study these symmetries. For a general reference on Lie groups in this context, see \cite{Bredon1972}.

Let $\cl G$ be any compact Lie group that acts orthogonally on $\bbR^{n}$.  By the Hilbert-Weyl finite generation theorem \cite{Weyl1946}, there exists a finite set $\cl P := \{p_1, \ldots p_k\}$ of polynomials $p_i: \bbR^n \to \bbR$ that generate the algebra $\cl P[\bbR^n]^{\cl G}$ of invariant polynomials under $\cl G$. We define the basis map $\rho: \bbR^n \to \bbR^k$ as $\rho = (p_1, \ldots p_k)$. 
We say a function $f \in C^\infty(\bbR^{n})$ is invariant under $\cl G$ if $f(\cl T \msZ) = f(\msZ)$ for all $\cl T \in \cl G, \msZ \in \bbR^{n}$, and we write $f \in C^\infty(\bbR^n)^{\cl G}$. For such an $f$, \cite{Schwarz1975} guarantees the existence of a function $\Phi \in C^\infty(\bbR^k)$  such that $f= \Phi \circ \rho$.

Thus, for any such $\cl G$, if one can determine a (not necessarily unique) generating set $\{p_i\}$ of polynomials, one recovers a representation of $\hat{W}$ in terms of these invariants. Such a generating set is called an \emph{integrity basis} of $\cl G$. 
Our goal in this section is to determine useful representations of $\hat{W}$ in terms of appropriate integrity bases \cite{Smith1994, Schwarz1975}. To determine these representations systematically, we define $\msZ = (\msu, \msv, \msu_s, \msv_s) \in \bbR^{12}$ and the maps $\sigma_{\text{iso}}, \sigma_{\text{flip}}: \cl O_c(2) \to  O(12)$ by 
\begin{align}
    \sigma_{\text{iso}}(\msT)\msZ &= (\det(\msT) \msT \msu, \msT \msv, \det(\msT)\msT \msu_s, \msT \msv_s),\\
    \sigma_{\text{flip}}(\msT)\msZ &= (\msT \msu, \msT \msv, \det(\msT)\msT \msu_s, \det(\msT)\msT \msv_s).
\end{align}
We define the groups
\begin{gather}
    \cl G_0 := \sigma_{\text{iso}}(\cl R_c(2))  = \sigma_{\text{flip}}(\cl R_c(2))\\
    \cl G_1 := \sigma_{\text{iso}}(\cl O_c(2)), \quad \cl G_2 := \sigma_{\text{flip}}(\cl O_c(2)).
\end{gather}
 It follows that, for each $i = 0,1,2$, $\cl G_i$ is a compact Lie group that acts orthogonally on $\bbR^{12} $. 
We assume that for each $s \in [0,L]$, $\hat{W}(\cdot, s) \in C^\infty(\bbR^{12})$. Conditions \eqref{eq:transversehemitropy} (hemitropy), \eqref{eq:transverseisotropy} (isotropy), and \eqref{eq:transversehemitropy} and \eqref{eq:flipsymmetry} (hemitropy and flip-symmetry) can now be restated as, for each $s \in [0,L]$, \[\hat{W}(\msZ, s) = \hat{W}(\cl T \msZ, s), \quad \cl T \in \cl G \] for $\cl G = \cl G_0, \cl G_1$, and $\cl G_2$, respectively. To compute integrity bases for each $\cl G_i$, we define
\begin{gather}
    \msx_1 := \msu_\perp, \quad \msx_2 := \msv_\perp, \quad \msx_3 := (\msu_s)_\perp, \quad \msx_4 := (\msv_s)_\perp\\
    g_{ij} := \msx_i \cdot \msx_j, \quad a_{ij} := \mse_3 \cdot (\msx_i \times \msx_j) \\
    z_1 := u_3, \quad z_2 := v_3, \quad z_3 := u_{3,s}, \quad z_4 := v_{3, s} 
\end{gather}
where \(\msa_\perp := \msa - (\msa \cdot \bs e_3)\bs e_3\) for any \(\msa \in \mathbb R^3\). For ease of notation, we define the auxiliary sets 
\begin{gather}
     \cl E_1 := \{g_{11}, g_{22}, g_{33}, g_{44}, g_{13}, g_{24}, a_{12}, a_{23}, a_{34}, a_{14}, z_2, z_4\}\\
     \cl E_2 := \{g_{11}, g_{22}, g_{33}, g_{44}, g_{12}, g_{34}, a_{13}, a_{14}, a_{23}, a_{24}, z_1, z_2\}\\
      \cl F_1 := \{g_{12}, g_{23}, g_{34}, g_{14}, a_{13}, a_{24}, z_1, z_3 \}\\ 
      \cl F_2 := \{g_{13}, g_{14}, g_{23}, g_{24}, a_{12}, a_{34}, z_3, z_4\}
\end{gather}
With this notation, we can state the following, which is almost a complete consequence of classical invariant theory. 
\begin{prop}
\label{thm:invariant_representation}
  For each $i = 0,1,2$, $\cl G_i$ has an integrity basis $\cl P_i$ given by 
  \begin{gather}
      \cl P_0 = \{g_{ij}, 1 \leq i \leq j \leq 4\} \cup \{a_{ij}, 1 \leq i < j \leq 4\} \cup \{z_1, z_2, z_3, z_4\}\\
      \cl P_1 =\cl E_1 \cup \{ pq, p,q \in \cl F_1\}\\
      \cl P_2 =\cl E_2 \cup \{ pq, p,q \in \cl F_2\}.
  \end{gather}
  Thus, for each $s \in [0,L]$, if $\hat{W}(\cdot,s)$ is invariant under $\cl G_i$, there exists a smooth function $\Phi_{i;s} \in C^\infty(\bbR^k)$, such that \[\hat{W}(\msZ, s) = \Phi_{i;s}(\rho_i(\msZ))\] where $\rho_i: \bbR^n \to \bbR^k$ is a basis map for $\cl G_i$. 
\end{prop}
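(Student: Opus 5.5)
The plan is to reduce each case to classical invariant theory of $SO(2)$ and $O(2)$ acting on several planar vectors plus scalars. Then we use the Hilbert--Weyl theorem together with Schwarz's theorem, as recalled above, to pass from polynomial to smooth invariants.

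\textbf{Splitting the coordinates.} First, write $\msZ=(\msx_1,\dots,\msx_4,z_1,\dots,z_4)$, identifying each $\msx_i$ with a vector in $\bbR^2=\mathrm{span}\{\bs e_1,\bs e_2\}$. Since every $\msT\in\cl O_c(2)$ fixes $\bs e_3$ and preserves $\bs e_3^\perp$, each $\cl G_i$ acts on the $\msx_i$ by (signed) planar orthogonal maps and on the $z_j$ by signs.

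\textbf{The group $\cl G_0$.} On $\cl R_c(2)$ we have $\det\msT=1$, so $\cl G_0$ is $SO(2)$ acting diagonally on four planar vectors and trivially on the $z_j$. The first fundamental theorem for $SO(2)$ (Weyl) says the invariant polynomials of several vectors are generated by the inner products $g_{ij}$ and the determinants $a_{ij}$. Since the $z_j$ are invariant coordinates, $\cl P_0$ follows. The same conclusion can be seen via complex coordinates $w_i=x_i^1+\mathrm{i}x_i^2$: invariant monomials $\prod w^{\alpha}\bar w^{\beta}$ with $|\alpha|=|\beta|$ factor into products of $w_i\bar w_j$, whose real and imaginary parts are $g_{ij}$ and $\pm a_{ij}$.

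\textbf{The groups $\cl G_1$ and $\cl G_2$.} Each is generated by $\cl G_0$ together with the single element coming from $\msE$. For $\cl G_1$, $\msE$ acts by
\[
(\msx_1,\msx_2,\msx_3,\msx_4,z_1,z_2,z_3,z_4)\mapsto(-\msE\msx_1,\msE\msx_2,-\msE\msx_3,\msE\msx_4,-z_1,z_2,-z_3,z_4).
\]
Hence $\cl P[\bbR^{12}]^{\cl G_1}$ is the fixed subalgebra of the $\bbZ_2$-action on $\cl P[\bbR^{12}]^{\cl G_0}=\bbR[\cl P_0]$. I will check that each generator in $\cl P_0$ is an eigenvector with eigenvalue $\pm1$:
\begin{itemize}
\item $g_{ij}\mapsto \epsilon_i\epsilon_j g_{ij}$, and $a_{ij}\mapsto -\epsilon_i\epsilon_j a_{ij}$ because $\det\msE=-1$;
\item $z_j\mapsto\epsilon_j z_j$, where $\epsilon=(-,+,-,+)$.
\end{itemize}
The even generators form exactly $\cl E_1$, and the odd ones form exactly $\cl F_1$.

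Now the invariant subalgebra is spanned by even monomials in these generators. Any such monomial contains an even number of odd factors, so it is a product of elements of $\cl E_1$ and pairwise products $pq$ with $p,q\in\cl F_1$. This gives $\cl P_1$ (with $p=q$ allowed). The $\cl G_2$ case is identical with $\epsilon=(+,+,-,-)$ from $\sigma_{\mathrm{flip}}$, yielding $\cl E_2$ and $\cl F_2$.

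\textbf{Smooth invariants.} Each $\cl G_i$ is compact, being the image of the compact group $\cl O_c(2)$, and acts orthogonally. Schwarz's theorem applied to the basis map $\rho_i$ therefore gives $\Phi_{i;s}\in C^\infty(\bbR^k)$ for each fixed $s$, which is the final assertion.

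\textbf{Main obstacle.} The main obstacle is the $SO(2)$ first fundamental theorem for several vectors, specifically that $g_{ij}$ and $a_{ij}$ suffice. I would either cite Weyl directly or use the complex-coordinate factorization argument above, which needs only the observation that a balanced monomial pairs each $w$ with a $\bar w$. A minor point to verify is the sign bookkeeping for $a_{ij}$ under the reflection: $\msE\msx_i\times\msE\msx_j$ has $\bs e_3$-component $\det(\msE|_{\bs e_3^\perp})=-1$ times the original.
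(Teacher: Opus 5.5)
Your proposal is correct and follows essentially the paper's argument. The paper proves the $\cl G_0$ case by the First Fundamental Theorem, handles $\cl G_1$ and $\cl G_2$ by a parity argument under the extra generator $\sigma(\msE)$, and concludes with Schwarz's theorem; your ``fixed subalgebra spanned by even monomials'' is exactly its average $\tfrac12\bigl(Q(\cl E_1,\cl F_1)+Q(\cl E_1,-\cl F_1)\bigr)$. Your explicit sign check $g_{ij}\mapsto\epsilon_i\epsilon_j g_{ij}$, $a_{ij}\mapsto-\epsilon_i\epsilon_j a_{ij}$, $z_j\mapsto\epsilon_j z_j$ verifies the splittings $\cl P_0=\cl E_1\cup\cl F_1$ and $\cl P_0=\cl E_2\cup\cl F_2$, which the paper only asserts.
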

\begin{proof} 
	The case of $\cl G = \cl G_0$ is an immediate consequence of the First Fundamental Theorem of classical invariant theory (Theorem 11.2.1 in \cite{Procesi2007} or \cite{Weyl1946}).  For transverse isotropy, set $\cl G = \cl G_1$, and observe that {$\cl P_0 = \cl E_1 \cup \cl F_1$}. If $P \in \cl P[\bbR^n]^{\cl G}$, then $P$ is invariant under $\cl G_0$, so that {$P = Q(\cl E_1, \cl F_1)$} for some polynomial $Q$. Because of the additional invariance of $P$ under $\sigma_{\text{iso}}(\msE)$, we also have that {$P = Q(\cl E_1, -\cl F_1)$}. We can therefore write $P$ as the average \begin{align}
		\label{eq:average}
		P  = \frac{1}{2}\bigl(Q(\cl E_1, \cl F_1) + Q(\cl E_1, -\cl F_1)\bigr). 
	\end{align} But the right hand side contains no monomials of odd total degree in the elements of $\cl F$, which proves the representation for $\cl P_1$. The representation for $\cl P_2$ is proved in the same way as for $\cl P_1$. 
\end{proof}
\begin{rem}
   The integrity basis $\cl P_1$ is generally redundant, and some of its generators may be eliminated using the algebraic relations (often called \emph{syzygies} in the literature) among the $g_{ij}$ and $a_{ij}$. We retain the above form because it makes the parity imposed by transverse isotropy transparent. Similar procedures can be used to determine representations for ``prismatic'' bars \cite{Healy2002}, whose symmetries form a dihedral group action. 
\end{rem}

It is not always true that the energy density $\hat{W}$ will be smooth on all of $\bbR^n$. Consider, for example, 

\begin{align}
    \hat{W}(\msu, \msv, \msu_s, \msv_s, s) &= \frac{1}{2}\xi(s)u_3^2 + \frac{1}{2}\zeta(s)u_{3,s}^2\\
    & + K_EK_F \left[ \frac{v_3^2 -2}{v_3^2 (K_E - 4K_F) -2K_E} |\msu_\perp|^2 + \frac{v_3^2}{(2-v_3^2)[v_3^2(K_E-4K_F) + 8K_F]}v^2_{3,s}\right] \\
\end{align}
where $K_E, K_F$ are positive constants. 
{When restricted to planar, untwisted rod deformations, this energy reduces to the energy derived and analyzed in 
 equation (24) of \cite{Barchiesi2019a} for a special class of pantographic beams. Here, we assume that $0<v_3 < \sqrt{2}$, where $v_3=\sqrt{2}$ represents both the maximal stretching admitted by the pantographic lattice and the threshold at which the rod transitions to a uniformly extensible cable. If we consider $\hat{W}$ to be defined on all of $\bbR^n$, observe that the set $K = \{|v_3| \geq \sqrt{2}\} \cup \{\bs 0\}$ is $\cl G$ invariant for all three groups above. Inspired by this example, we derive a corollary of Proposition \ref{thm:invariant_representation} for functions which blow up or are not defined on a $\cl G$-invariant closed subset of the domain. 
 The proof relies on tools and results in \cite{Schwarz1975} and basic facts about partitions of unity. }
\begin{cor}
    Suppose that $K$ is a closed, $\cl G$-invariant subset of $\bbR^n$ and that $f \in C^\infty(\bbR^n \setminus K)^{\cl G}$ and $\rho$ is a $\cl G$ basis map. Then there exists a function $F \in C^\infty (\bbR^k \setminus \rho (K))$ such that 
    \begin{align}
        f(\msZ) = F(\rho(\msZ)), \quad \msZ \in \bbR^n \setminus K.
    \end{align}
\end{cor}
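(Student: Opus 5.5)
The plan is to reduce to Proposition~\ref{thm:invariant_representation} (more precisely, to Schwarz's theorem \cite{Schwarz1975} on which it rests) by localizing with $\cl G$-invariant cutoffs on $\bbR^n$, and then gluing the resulting local representatives with a partition of unity on $\bbR^k\setminus\rho(K)$.

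\textbf{Topological facts about $\rho$.} First I record three standard facts, valid because $\cl G$ is compact and acts orthogonally.
\begin{enumerate}[(i)]
\item The invariant polynomials separate orbits: $\rho(\msZ)=\rho(\msZ')$ if and only if $\msZ'\in\cl G\msZ$.
\item $\rho$ is proper. Indeed $|\msZ|^2$ is an invariant polynomial, hence a polynomial in $p_1,\dots,p_k$, so it is bounded on $\rho^{-1}$ of any bounded set. In particular $\rho$ is a closed map, so $\rho(\bbR^n)$ and $\rho(K)$ are closed in $\bbR^k$.
\item Combining (i) with the $\cl G$-invariance of $K$ gives $\rho^{-1}(\rho(K))=K$. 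Hence $\rho$ maps $\bbR^n\setminus K$ into $\bbR^k\setminus\rho(K)$.
\end{enumerate}
More generally, $\rho^{-1}(\bbR^k\setminus\rho(C))=\bbR^n\setminus C$ for every closed invariant set $C$.

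\textbf{Local representatives.} For each $\msZ_0\notin K$, the orbit $\cl G\msZ_0$ is compact and disjoint from the closed set $K$. I choose a smooth bump equal to $1$ near this orbit, with compact support disjoint from $K$, and average it over $\cl G$ with respect to Haar measure. This gives a smooth invariant $\chi_0$ with compact support disjoint from $K$ that equals $1$ on an invariant open neighbourhood $W_0$ of the orbit. The product $\chi_0 f$, extended by zero, lies in $C^\infty(\bbR^n)^{\cl G}$. Schwarz's theorem then gives $\Phi_0\in C^\infty(\bbR^k)$ with $\chi_0 f=\Phi_0\circ\rho$, and in particular $f=\Phi_0\circ\rho$ on $W_0$.

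Next I set $V_0:=\bbR^k\setminus\rho(\bbR^n\setminus W_0)$. This set is open by (ii), since $\bbR^n\setminus W_0$ is closed and invariant. By (i), $\rho^{-1}(V_0)=W_0$, and $V_0\supset\rho(\cl G\msZ_0)$. The sets $V_j$ obtained from all such $\msZ_j$, together with $O:=\bbR^k\setminus\rho(\bbR^n)$, form an open cover of $\bbR^k\setminus\rho(K)$.

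\textbf{Gluing.} I take a smooth locally finite partition of unity $\{\theta_O\}\cup\{\theta_j\}$ on the manifold $\bbR^k\setminus\rho(K)$ subordinate to this cover, and define
\begin{align}
F:=\sum_j\theta_j\Phi_j\in C^\infty(\bbR^k\setminus\rho(K)),
\end{align}
which amounts to assigning the value $0$ on $O$. For $\msZ\notin K$ we have $\theta_O(\rho(\msZ))=0$. Moreover, whenever $\theta_j(\rho(\msZ))\neq0$, we have $\rho(\msZ)\in V_j$, hence $\msZ\in W_j$, hence $\Phi_j(\rho(\msZ))=f(\msZ)$. Summing over $j$ gives $F(\rho(\msZ))=f(\msZ)$.

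\textbf{Main obstacle.} The step I expect to need the most care is the topological input (i)--(ii), namely orbit separation by invariant polynomials for compact groups and properness of $\rho$. These are exactly what make the sets $V_j$ open with preimages $W_j$, and what keep $\rho(K)$ closed. I would cite \cite{Schwarz1975} (or the Stone--Weierstrass argument: invariant continuous functions separate closed orbits, and averaged polynomials approximate them) rather than reprove them.
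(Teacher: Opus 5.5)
Your proof is correct. It uses the same ingredients as the paper: properness of $\rho$, orbit separation, Schwarz's theorem on compactly supported invariant pieces, and a partition of unity on $\bbR^k\setminus\rho(K)$. The difference is where you localize.

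The paper localizes on the target. It takes a locally finite partition of unity $\{\varphi_i\}$ on $U=\bbR^k\setminus\rho(K)$ with $\overline{U}_i\subset U$ compact, and pulls it back to $\psi_i=\varphi_i\circ\rho$. These are automatically $\cl G$-invariant, and by properness they are compactly supported away from $K$. It then applies Schwarz's theorem to $\psi_i f$. Finally it multiplies each resulting $F_i$ by a second cutoff $\chi_i$, supported in $U_i$ and equal to $1$ on $\text{supp}\,\varphi_i$, so that $F=\sum_i\chi_iF_i$ is locally finite and takes the right values.

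You localize on the source. You build invariant bumps around each orbit by Haar averaging, so each $\Phi_j$ represents $f$ on a whole saturated open set $W_j=\rho^{-1}(V_j)$. Openness of $V_j$ comes from $\rho$ being a closed map. A single partition of unity on the target then glues the $\Phi_j$ with no second cutoff.

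The paper's route gets invariance of the cutoffs for free and needs no Haar measure or saturation argument. Yours avoids the auxiliary cutoffs $\chi_i$, because each local representative is already valid on a full preimage $\rho^{-1}(V_j)$.

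Incidentally, your extra set $O$ is redundant. Each $V_j$ already contains $\bbR^k\setminus\rho(\bbR^n)$, since $\rho(\bbR^n\setminus W_j)\subset\rho(\bbR^n)$. Including it does no harm.
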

\begin{proof}
   Set $\Omega = \bbR^n \setminus K$. The map $\rho$ is proper: since $|\msZ|^2$ is an invariant polynomial, it is a polynomial in the generators defining $\rho$, so the inverse image under $\rho$ of a bounded set is bounded; the inverse image of a compact set is therefore closed and bounded, hence compact. Consequently, $\rho(K)$ is closed. Moreover, the basis map separates $\cl G$-orbits \cite{Schwarz1975}: if $\rho(\msX) = \rho(\msY)$, then $\msX$ and $\msY$ belong to the same orbit of $\cl G$. This implies that $\rho^{-1}(\rho (K)) = K$.
  
   Set $U = \bbR^k \setminus \rho(K)$. Since $U$ is an open set of $\bbR^k$, choose a locally finite open cover $\{U_i\}_{i \in \bbN}$ of $U$ such that $\overline{U}_i \subset U$ is compact for each $i$, and let $\{\varphi_i\}$ be a smooth partition of unity subordinate to this cover. 
   
   Define $\psi_i(\msX) := \varphi_i(\rho(\msX))$ and {$V_i := \rho^{-1}(U_i)$}. Then {$\{V_i\}$ is an open cover of $\Omega$, and $\overline{V}_i \subset \Omega$}. {Moreover,} $\text{supp } \psi_i \subset V_i$. Furthermore, $\{\psi_i\}$ is a partition of unity of $\Omega$, as 
   \begin{align}
       \sum_{i}\psi_i(\msX) = \sum_{i} \varphi_i(\rho(\msX)) \equiv 1, \quad \msX \in \Omega
   \end{align}
   and the sum remains locally finite. Since $\psi_i$ is $\cl G$-invariant and has support away from $K$, we have that $\psi_i f \in C^\infty(\bbR^n)$ and is $\cl G$-invariant. Thus, there exists an $F_i \in C^\infty(\bbR^k)$ such that $\psi_i f (\msX) = F_i(\rho(\msX))$. Let $\chi_i \in C^\infty(\bbR^k)$ be a smooth cutoff such that $\chi_i = 1$ on $\text{supp } \varphi_i$ and $\text{supp } \chi_i \subset U_i$. Then we still have $\psi_if = (\chi_i F_i) \circ \rho$, and can now decompose 
   \begin{align}
       f(\msX) &= \sum_{i} \psi_i(\msX) f(\msX) \\
            &= \sum_i \chi_i(\rho(\msX)) F_i(\rho(\msX)) \\
            &= F(\rho(\msX)),
   \end{align}
     for $\msX \in \Omega$, where $F(\msx) = \sum_{i} \chi_i(\msx)F_i(\msx)$. As a locally finite sum of $C^\infty(U)$ functions, we must have $F \in C^\infty(U)$. 
\end{proof}
 
\section{Buckling of a planar inextensible strain-gradient elastic Cosserat rod}

In this section, we study the planar buckling of an inextensible, unshearable, untwisted, transverse isotropic, strain-gradient Cosserat rod. 
This problem is a natural strain-gradient analogue of the classical Euler elastica and provides the simplest setting in which to study the intrinsic theory developed above. 

In classical Cosserat rod theory, related buckling problems are governed by second-order nonlinear equations and may be analyzed using a global bifurcation framework \cite{Rabinowitz1971, Rabinowitz1973, crandallrabinowitz71, antman05}. In the present setting, the dependence of the stored energy on strain gradients leads instead to a fourth-order nonlinear boundary value problem. We consider a cantilever rod, fixed at one end and subjected to a compressive force at the other. After rewriting the equilibrium equations in an integral form, we apply global bifurcation theory to prove the existence of nonlinear branches of equilibria bifurcating from the trivial straight configuration. The structure of the argument is related to Wolfe's analysis {\cite{wolfe90}} of magnetic rods, but the governing equations, constitutive assumptions, and natural boundary conditions arise here from the strain-gradient nature of the rod rather than from magnetic loading of a classical rod.

This analysis also clarifies the relation between linearized critical-load calculations and nonlinear postbuckling behavior. Eremeyev \cite{Eremeyev2026} studied critical forces and perturbative postbuckling expansions for third-gradient beam models. The results below show, in the present strain-gradient Cosserat setting, that each critical load of the linearized problem is an actual bifurcation point of the nonlinear equilibrium problem and that the corresponding local branch is analytic. Thus the perturbative expansion near a critical load is supported by a genuine local solution curve. Accordingly, our nonlinear bifurcation result provides a rigorous basis for interpreting the critical loads predicted by Eremeyev's linearized analysis as genuine buckling thresholds.

\subsection{Setup} Fix $L>0$ and let $I=(0,L)$. Let $\{\bs e_1, \bs e_2, \bs e_3\}$ denote the standard basis of $\bbR^3$.
 We suppose that the rod has length $L$ and reference configuration $\bs r_0(s) = s\bs e_3$ for $s\in [0,L]$. The $\{\bs e_3, \bs e_1 \}$-planar configuration of an inextensible and unshearable rod takes the form 
\begin{align}
    \label{eq:inextensible_unshearable_motion}
    \bs v(s) = \bs r_s (s) =\cos(\theta(s))\bs e_3 + \sin(\theta(s)) \bs e_1,
\end{align}
where $\theta(s)$ denotes the angle that $\bs r_s$ makes with $\bs e_3$. {Because the rod is untwisted, we take} $\bs d_2 = \bs e_2, \bs d_3 = \bs r_s$ and $\bs d_1 = \bs d_2 \times \bs d_3$, so that
\begin{align}
    \bs R = \bs d_k \otimes \bs e_k. 
\end{align}
It follows from these definitions that $\bs u (s) = \theta'(s) \bs e_2$, and therefore $\msu = {\bs R}^T\bs u = \theta'(s) \bs e_2$.  
 
To incorporate strain-gradient effects, we assume the stored energy density of the inextensible rod is given by \eqref{eq:stored_energy} with $A, B, \xi, \zeta \in C^2(\overline{I})$, $A \geq 0$ and $B \geq b >0$ for some $b  > 0$. Hyperelasticity guarantees that, in general,
\begin{gather}
    \msm(s) = A(s)\msu_\perp(s) + \xi(s) u_3(s) \bs e_3\\
    \msg(s)= B(s)(\msu_s)_\perp (s) + \zeta(s)u_{3,s}(s) \bs e_3.
\end{gather}
 In our specific context, $u_3 = u_{3,s} = 0$, so $\msu_\perp = \msu$, and therefore
\begin{align}
	\label{eq:constitutive_response_1}
	\bs m(s) =  A(s) \theta'(s) \bs e_2\\
	\bs g(s) = B(s) \theta''(s) \bs e_2. \label{eq:constitutive_response_2}
\end{align}
These constitutive relations generalize those of Euler's elastica discussed in \cite{antman05}. 

We assume that there are no body forces or body couples on the rod, i.e., $\bs f \equiv \bs \ell \equiv \bs 0$. For simplicity, we select the representative for which the reactive double-force vanishes, $\bs h\equiv\bs0$. As shown in Section~3.4, the constrained principle of virtual work uniquely determines the multiplier functional, and hence the gauge-invariant contact-force resultant $\bs F=\bs n-\mathscr D_s\bs h$, but it does not determine $\bs n$ and $\bs h$ separately. Thus, the condition $\bs h=\bs0$ is an additional model choice, not a consequence of the inextensibility and unshearability constraint or of the uniqueness of the multiplier functional. The heuristic penalty argument of Section~4.2 provides a constitutive motivation for this choice: it is the zero-length-scale member of the family selected by the relation $\msh=\mathsf D\partial_s\msn$. We adopt this zero-length-scale model throughout the present section. It follows that the contact double force vanishes at every cut, $\bs N_i=\bs0$, and that the contact-force resultant reduces to $\bs F=\bs n$.

We impose essential and natural boundary conditions
\begin{gather}
    \bs r(0) = \bs 0,\quad  \bs R(0) = \bs I , \quad \bs R_s(0) = \bs 0, \label{eq:placement_boundary_conditions} \\
	-\lambda\bs e_3 = \bs n_L = (\bs n - (\p_s \bs h - \bs u \times \bs h))_{s =L}\\
	\bs 0 = \bs M_L = \bs g(L), \quad \bs 0 = \bs m_L = (\bs m - (\p_s \bs g - \bs u \times \bs g) - \bs v \times \bs h)|_{s =L} \label{eq:couple_boundary_conditions},
\end{gather}
where $\bs I = \bs e_k \otimes \bs e_k$ is the identity tensor.
 These conditions guarantee that one end of the rod is clamped, and the other undergoes a compressive contact force $-\lambda \bs e_3$ with no contact couples. In terms of $\theta$, \eqref{eq:placement_boundary_conditions} and \eqref{eq:couple_boundary_conditions} become 
\begin{gather}
\theta(0) =0, \quad  \theta'(0) =0 \label{eq:theta_boundary_conditions_1}\\ 
    \theta''(L) = 0, \quad [A(s)\theta'(s) - (B(s)\theta''(s))' ]|_{s =L} = 0, \label{eq:theta_boundary_conditions_2}
\end{gather}
respectively. In the case that $A \equiv 0$, configurations of the rod having constant curvature store no energy, and are examples of floppy modes discussed in \cite{rizzi26}. Our clamping condition $\theta'(0)=0$ eliminates such floppy modes in the analysis below.

The governing equations \eqref{eq:fieldequations} in this case reduce to the vector equation
\begin{align}
	\label{eq:reduced_problem}
	\p_s(\bs m - (\p_s \bs g - \bs u \times \bs g)) + \bs v \times (-\lambda \bs e_3) =0.
\end{align}
Substituting \eqref{eq:constitutive_response_1} and \eqref{eq:constitutive_response_2}  into \eqref{eq:reduced_problem}, and using the boundary conditions \eqref{eq:theta_boundary_conditions_1} and \eqref{eq:theta_boundary_conditions_2}, reduces the problem to
\begin{align}
	\label{eq:dimensional_BVP}
	\begin{cases}
		(B \theta'')'' - (A \theta')' - \lambda\sin(\theta) = 0, \quad s \in (0,L)\\
		\theta(0) = \theta'(0) = \theta''(L) = [A\theta' - (B \theta'')']_{s =L} = 0.
	\end{cases}
\end{align}
Introducing the auxiliary functions 
\begin{align}
	\label{eq:auxiliary_functions}
	u(s):= \theta(Ls),\quad  \beta(s) := B(Ls)/L^4, \quad \alpha(s) := A(Ls)/L^2
\end{align}
for $s \in (0,1)$, we see that $u$ satisfies the rescaled equation
\begin{align}
	\label{eq:BVP}
	\begin{cases}
		(\beta u'')'' - (\alpha u')' - \lambda \sin u=0, \quad s \in (0,1)\\
	u(0) = u'(0) =0, \quad u''(1) = [\alpha u' - (\beta u'')']_{s =1} =0.
	\end{cases}
\end{align}
The \emph{linearized} boundary value problem associated with \eqref{eq:BVP} is given by
\begin{align}
	\label{eq:linearized_BVP}
	\begin{cases}
		(\beta u'')'' - (\alpha u')' - \lambda u =0, \quad s \in (0,1)\\
	u(0) = u'(0) =0, \quad u''(1) = [\alpha u' - (\beta u'')']_{s =1} =0.
	\end{cases}
\end{align}

Of course, $u \equiv 0$ on $[0,1]$ is a solution to \eqref{eq:BVP}. Our goal is to study non-trivial solutions to the nonlinear eigenvalue problem \eqref{eq:BVP}.

\subsection{Existence and simplicity of eigenvalues}
Let
\[
    V
    :=
    \left\{
        u\in H^2(0,1):
        u(0)=u'(0)=0
    \right\},
\]
and define the symmetric bilinear form
\[
    a(u,v)
    :=
    \int_0^1
    \left[
        \beta(s)u''(s)v''(s)
        +
        \alpha(s)u'(s)v'(s)
    \right]\,ds,
    \quad u,v\in V.
\]
Since $\beta\geq b>0$ and $\alpha\geq0$, the form $a$ is continuous
and coercive on $V$. Moreover, the embedding
$V\hookrightarrow L^2(0,1)$ is compact. The weak eigenvalue problem
associated with \eqref{eq:linearized_BVP} is therefore
\begin{equation}
\label{eq:weak-eigenvalue-problem}
    a(u,v)
    =
    \lambda\int_0^1u(s)v(s)\,ds
    \quad\text{for every }v\in V.
\end{equation}
The boundary conditions at $s=1$ arise as the natural boundary
conditions associated with this weak formulation. Standard spectral theory for compact self-adjoint operators \cite{Brezis2011} implies
that there exists a sequence of real eigenvalues
\[
    \lambda_1\leq\lambda_2\leq\cdots
\]
tending to infinity whose corresponding eigenfunctions form an orthonormal basis of
$L^2(0,1)$. Under the regularity assumptions on $\alpha$ and $\beta$,
the weak eigenfunctions are classical solutions of
\eqref{eq:linearized_BVP}. Taking $v=u$ in \eqref{eq:weak-eigenvalue-problem} gives
\[
    \lambda\int_0^1u^2\,ds
    =
    \int_0^1
    \left[
        \beta(u'')^2+\alpha(u')^2
    \right]\,ds.
\]
This identity guarantees that every eigenvalue is positive.

We say that an eigenvalue $\lambda_i$ of \eqref{eq:linearized_BVP} is \emph{simple} if the eigenspace $E_i$ associated to $\lambda_i$ is one-dimensional. Using the techniques of {\cite{Leighton1958}}, we can show  
\begin{prop}
	\label{thm:eigenvalue_simplicity}
	Suppose that $\alpha \in C^1([0,1])$ and $\beta \in C^2([0,1])$. Furthermore, suppose that $\alpha(s) \geq 0$ and $\beta(s) > 0$ for all $s \in [0,1]$. Then the eigenvalues of \eqref{eq:linearized_BVP} are simple. 
\end{prop}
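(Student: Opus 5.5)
The plan is to show that any two eigenfunctions for the same eigenvalue are linearly dependent. The tool is a Wronskian/Lagrange-identity argument in the spirit of \cite{Leighton1958}, which exploits the clamped conditions at $s=0$. Fix an eigenvalue $\lambda$ and suppose $u,w$ are two classical solutions of \eqref{eq:linearized_BVP}. The fourth-order equation $(\beta y'')''-(\alpha y')'-\lambda y=0$ with $\beta>0$, $\beta\in C^2$, $\alpha\in C^1$ can be written as a regular first-order linear system in the quasi-derivatives
\[
y,\quad y',\quad \beta y'',\quad (\beta y'')'-\alpha y'.
\]
The corresponding initial value problem at $s=0$ is therefore uniquely solvable. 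The conditions $y(0)=y'(0)=0$ leave a two-dimensional space of solutions, parameterized by $(\beta y''(0),\,((\beta y'')'-\alpha y')(0))$. The two natural conditions at $s=1$ then impose two linear constraints on this space. The eigenspace is thus at most two-dimensional, and the task is to exclude dimension two.

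Suppose the eigenspace were two-dimensional. Then every solution of the equation satisfying the two conditions at $s=0$ would also satisfy both natural conditions at $s=1$. In particular, by linearity we could choose a nontrivial eigenfunction $y$ whose values at $s=1$ satisfy an additional condition: taking the combination $y=c_1u+c_2w$ with $y(1)=0$ gives, together with $y''(1)=0$ and $((\beta y'')'-\alpha y')(1)=0$, a solution with three vanishing boundary data at $s=1$.

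The key step is to derive a contradiction from these three conditions. Multiply the equation by $y'$, or better by the quasi-derivatives, and integrate. The natural choice is the energy-type identity obtained by multiplying by $y$ and integrating by parts:
\[
\int_0^1\bigl[\beta (y'')^2+\alpha(y')^2\bigr]ds-\lambda\int_0^1y^2\,ds=\Bigl[y\,((\beta y'')'-\alpha y')-y'\beta y''\Bigr]_0^1 .
\]
This alone only recovers the eigen-identity, so I would instead use a second identity. Multiplying by $y'$ and integrating yields the boundary combination
\[
\Bigl[y'((\beta y'')'-\alpha y')-\tfrac12\beta (y'')^2-\tfrac{\lambda}{2}y^2\Bigr]_0^1
\]
together with the bulk term $\tfrac12\int(\beta'(y'')^2+\alpha'(y')^2)$. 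This is awkward for non-constant coefficients.

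Instead I would follow Leighton--Nehari more directly. Consider the Wronskian-type bilinear concomitant $[u,w](s)=u\,\tau w-w\,\tau u-u'\beta w''+w'\beta u''$, where $\tau y:=(\beta y'')'-\alpha y'$. This quantity is constant by the Lagrange identity, and it vanishes because of the conditions at $0$. I would use it together with an oscillation/sign argument showing that a nontrivial solution of $(\beta y'')''-(\alpha y')'=\lambda y$ with $\lambda>0$ cannot have $y=y'=0$ at $0$ and $y=\beta y''=0$ at $1$ unless it vanishes identically. Equivalently, the problem would have to be self-adjoint with "too many" conditions, giving an overdetermined system.

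I expect this last step to be the main obstacle. The concrete route I would try first is to show that the $2\times2$ matrix mapping initial data $(\beta y''(0),\tau y(0))$ to $(\beta y''(1),\tau y(1))$ cannot vanish identically. If it did, there would be solutions with $y''(0)\ne0$ having $\beta y''$ and $\tau y$ vanishing at both ends. Plugging such a solution and the clamped one into the constant concomitant, with $[u,w](1)$ computed using the vanishing of $\beta y''(1)$ and $\tau y(1)$, gives $u(1)\tau w(1)-\dots=0$ trivially. So instead I would argue via the associated Sturm-type result of \cite{Leighton1958}: for $\lambda>0$ the functions $\beta y''$ and $\tau y$ of the solution with initial data $(1,0)$ cannot both vanish at the same point $s=1$. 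This uses the monotonicity of the quadratic form $y'\cdot\tau y-\beta(y'')^2\cdot(\text{sign})$ along solutions, and that is the piece I would work out carefully.
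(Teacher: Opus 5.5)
Your reduction is correct. The clamped conditions at $s=0$ leave a two-dimensional solution space parameterized by $(\beta y''(0),\tau y(0))$, where $\tau y=(\beta y'')'-\alpha y'$. A two-dimensional eigenspace therefore means that the $2\times 2$ shooting matrix sending these data to $(\beta y''(1),\tau y(1))$ vanishes. So it suffices to exhibit one such solution that violates a natural condition at $s=1$.

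The gap is that the proposal stops exactly there, and none of the tools you list can finish it. The Lagrange concomitant of two solutions vanishing with their first derivatives at $s=0$ is identically zero, so it carries no information. The identity obtained by multiplying by $y'$ has bulk term $\tfrac12\int_0^1\bigl(\beta'(y'')^2+\alpha'(y')^2\bigr)\,ds$. This has no sign, because nothing is assumed about $\beta'$ or $\alpha'$, so there is no monotone quadratic form of the kind you invoke. (Monotonicity of, say, $y'\,\tau y$ is a consequence of the positivity below, not a route to it.) Your oscillation claim is false as stated. The conditions $y=y'=0$ at $0$ and $y=\beta y''=0$ at $1$ are the clamped--hinged conditions, which define a self-adjoint problem with infinitely many positive eigenvalues and nontrivial eigenfunctions. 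Adding $\tau y(1)=0$ makes the system overdetermined, but that is precisely what has to be proved.

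The missing idea is sign propagation. Write $(y_0,y_1,y_2,y_3):=(y,y',\beta y'',\tau y)$; then the equation is the first-order system
\[
y_0'=y_1,\quad y_1'=\beta^{-1}y_2,\quad y_2'=\alpha y_1+y_3,\quad y_3'=\lambda y_0 .
\]
Its coefficients are all nonnegative because $\beta>0$, $\alpha\ge 0$, and $\lambda>0$ (the last from the Rayleigh identity). For your data $(1,0)$, integrating twice gives
\[
\beta y''(s)=1+\int_0^s\alpha y'\,d\xi+\int_0^s(s-t)\lambda y(t)\,dt .
\]
Near $s=0$ one has $y,y'>0$. If $y'$ had a first zero $s_*\in(0,1]$, the right-hand side would be at least $1$ on $[0,s_*]$. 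Then $y'$ would be strictly increasing there, which is impossible. Hence $y,y'>0$ on $(0,1]$ and $\beta y''(1)\ge 1\neq 0$, so the shooting matrix is nonzero.

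This is the mechanism of the paper's proof (the technique of \cite{Leighton1958}). The paper instead takes the combination with $w''(0)=0$ and $\tau w(0)>0$, using uniqueness for the initial value problem to get strictness, and shows $w''>0$ on $(0,1]$, contradicting $w''(1)=0$. With this step inserted, your shooting-matrix route is complete, and your data $(1,0)$ even spare the uniqueness argument.
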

\begin{proof}
	Fix an eigenvalue $\lambda =\lambda_i$ of \eqref{eq:linearized_BVP}, and let $E =  E_i$ denote the corresponding eigenspace. Suppose for contradiction that $\dim(E) > 1$, and let $u_1, u_2 \in E$ be linearly independent. For $u \in E$, we see upon integrating that
\begin{align}
	\label{eq:I_geq_0_1_integral}
	(\beta u'')' - \alpha u' = [(\beta u'')' - \alpha u']_{s=0} + \int_0^s \lambda u(t) dt.
\end{align}
We integrate once more to obtain 
\begin{align}
	\label{eq:I_geq_0_2_integral}
	\beta u'' &= [\beta u'']_{s=0} + \int_0^s [\alpha u'](\xi) + [(\beta u'')' - \alpha u']_{s=0} d\xi + \int_0^s (s-t) (\lambda u(t))dt  \\
		  &= [\beta u'']_{s=0} + s[(\beta u'')' - \alpha u']_{s=0} + \int_0^s [\alpha u'](\xi) d\xi + \int_0^s (s-t)\lambda u(t) dt
\end{align}
where we have used Fubini's theorem to switch the order of integration.  

Set $w := c_1u_1 + c_2u_2$ for $c_1, c_2 \in \bbR$ to be chosen presently. Then $w \in  E$, and in particular \[w(0) =w'(0) = w''(1) = [\alpha w' - (\beta w'')']_{s=1}= 0. \] Furthermore, we can choose $c_1$ and $c_2$ such that $c_1^2 +c_2^2 > 0$, $w''(0) = 0$ and 
\begin{align}
	\label{eq:w_boundary_inequality}
	[ (\beta w'')' -\alpha w']_{s=0} \geq 0. 	
\end{align}

Since we assumed $u_1$ and $u_2$ linearly independent, $w$ cannot be identically zero. By the uniqueness of the associated initial value problem for \eqref{eq:linearized_BVP}, the inequality \eqref{eq:w_boundary_inequality} must be strict.  

We claim that $w, w'$ and $w''$ are all strictly positive on $(0,1]$. Indeed, in the present case, we have $[(\beta w'')' - \alpha w']_{s=0} = (\beta w''')_{s=0}$, so that 
\begin{align}
	\label{eq:w_reinforcement_inequality}
	\beta w'' = s[\beta w''']_{s=0} + \int_0^s [\alpha w'](\xi) d\xi + \int_0^s (s-t)(\lambda w(t))dt
\end{align}
and $(\beta w''')_{s=0} > 0$. Since $\beta $ is positive, we must have  $w'''(0) >0$. Since $w(0) = w'(0) = w''(0) =0$, $w$ and $w'$ must be positive on an interval $(0, \eps)$ for some  $\eps>0$. If $w'$ were to vanish on (0,1], relabel $\eps$  as the first point at which this occurs. By \eqref{eq:w_reinforcement_inequality}, $w''(s) > 0$ on $(0, \eps]$, so that $w'$ is increasing and $w'(s) > 0$ on $(0, \eps]$, contradicting the definition of $\eps$. 

Thus, $w'$, and therefore $w$, are positive on $(0, 1]$. The identity \eqref{eq:w_reinforcement_inequality} implies that  $w''$ is positive on $(0, 1]$. However, this contradicts the fact that  $w''(1) =0$. Thus $E$ is one-dimensional. 
\end{proof}

\subsection{Bifurcation analysis}
Let $\cl X = C([0,1]; \bbR^4)$ with the standard norm. We can reformulate both \eqref{eq:BVP} and \eqref{eq:linearized_BVP} as first-order systems of integral equations for $\bs w = (w_0, w_1, w_2, w_3) \in \cl X$ that satisfy
\begin{gather}
	\label{eq:integral_system}
	w_0 = \int_0^s w_1 dt, \quad w_1 = \int_0^s \frac{w_2}{\beta} dt, \quad w_2 = \int_1^s (w_3 + \alpha w_1) dt,\\ 
	w_3 =  \int_1^s \lambda \sin(w_0) dt  
\end{gather}
and 
\begin{gather}
\label{eq:linear_integral_system}
	w_0 = \int_0^s w_1 dt, \quad w_1 = \int_0^s \frac{w_2}{\beta} dt, \quad w_2 = \int_1^s (w_3 + \alpha w_1) dt, \\
	w_3 =  \int_1^s \lambda w_0 dt, 
\end{gather}
respectively. This formulation naturally incorporates the above boundary conditions. We define the operator $\bs f: \bbR \times \cl X \to \cl X$ by 
\begin{align}
	\label{eq:bifurcation_operator}
	\bs f[\lambda, \bs w]& := (\bs I -\bs J) \bs w  - \lambda \bs N \bs w - \bs h(\lambda, \bs w)
\end{align}
where $\bs I: \cl X \to \cl X$ is the identity operator, and  
\begin{gather}
	\label{eq:linearized_bifurcation}
	\bs J \bs w := \left(\int_0^s w_1 dt, \int_0^s \frac{w_2}{\beta } dt, \int_1^s (w_3 + \alpha w_1) dt, 0 \right),\\
	\bs N\bs w := \left(0, 0, 0, \int_1^s  w_0 dt\right), \\
     \bs h(\lambda,\bs w) := \left(0, 0, 0, \int_{1}^s \lambda (\sin w_0 - w_0 ) dt \right). 
\end{gather}
The operators $\bs J, \bs N: \cl X \to \cl X$ and $\bs h: \bbR \times \cl X \to \cl X$ are compact by an application of the Arzel\`a--Ascoli theorem, and $\bs h(\lambda, \bs w) = o(\|\bs w\|)$ as $\bs w \to \bs 0$ uniformly on bounded $\lambda$ intervals. We say that $\mu$ is a characteristic value of $\bs f$ if there exists a nontrivial $\bs w$ satisfying $\bs 0 = (\bs I - \bs J)\bs w - \mu \bs N \bs w$. We let $\cl E (\bs f)$ denote the set of characteristic values of $\bs f$. We note that $\cl E (\bs f)$ is precisely the set of eigenvalues of the linearized boundary value problem \eqref{eq:linearized_BVP}. 

\begin{prop}
	\label{prop:fredholm_alternative}
	The operator $\bs I - \bs J : \cl X \to \cl X$ has a bounded inverse.
\end{prop}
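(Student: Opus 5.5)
Since $\bs J$ is compact on $\cl X$, as noted after \eqref{eq:linearized_bifurcation} by Arzel\`a--Ascoli, the plan is to invoke the Fredholm alternative (Riesz--Schauder theory). For a compact operator $\bs J$, the operator $\bs I-\bs J$ is surjective as soon as it is injective. A bijective bounded operator on the Banach space $\cl X$ then has a bounded inverse by the open mapping theorem. So the whole proof reduces to showing $\ker(\bs I-\bs J)=\{\bs 0\}$.

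Suppose $\bs w\in\cl X$ satisfies $\bs w=\bs J\bs w$.

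\textbf{Step 1 (fourth component and regularity).} The fourth component of $\bs J\bs w$ vanishes, so $w_3\equiv 0$. The remaining identities are
\begin{align}
w_0=\int_0^s w_1\,dt,\quad w_1=\int_0^s \frac{w_2}{\beta}\,dt,\quad w_2=\int_1^s \alpha w_1\,dt .
\end{align}
Because the integrands are continuous, $w_0,w_1,w_2\in C^1([0,1])$. They satisfy
\begin{align}
w_1'=w_2/\beta,\quad w_2'=\alpha w_1,\quad w_1(0)=0,\quad w_2(1)=0 .
\end{align}

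\textbf{Step 2 (energy identity).} Multiply $w_1'=w_2/\beta$ by $w_2$ and integrate over $[0,1]$. Integrating by parts and using the endpoint conditions gives
\begin{align}
\int_0^1\frac{w_2^2}{\beta}\,ds=\int_0^1 w_2w_1'\,ds=\bigl[w_2w_1\bigr]_0^1-\int_0^1 w_2'w_1\,ds=-\int_0^1\alpha w_1^2\,ds .
\end{align}
Since $\beta>0$ and $\alpha\ge 0$, the left side is nonnegative and the right side is nonpositive. Hence both vanish, and in particular $w_2\equiv 0$.

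\textbf{Step 3 (conclusion).} From $w_2\equiv0$ we get $w_1'\equiv 0$, and with $w_1(0)=0$ this gives $w_1\equiv 0$. Then $w_0\equiv0$ from its integral formula. Thus $\bs w=\bs 0$, the operator is injective, and the Fredholm alternative concludes the proof.

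The only step needing care is the energy identity. It relies on the mixed endpoints: $w_1$ vanishes at $0$ and $w_2$ vanishes at $1$, so the boundary term drops. It also relies on the sign assumptions $\alpha\ge0$ and $\beta\ge b>0$, which control the sign of both sides.

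A more hands-on alternative is to solve $(\bs I-\bs J)\bs w=\bs y$ directly as a linear two-point problem via a shooting parameter. The Fredholm route avoids this.
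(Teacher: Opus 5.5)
Your proof is correct and follows the paper's own argument. Compactness of $\bs J$ and the Fredholm alternative reduce the claim to injectivity of $\bs I-\bs J$, which you obtain from the same energy identity $\int_0^1\beta\,(w_0'')^2\,ds+\int_0^1\alpha\,(w_0')^2\,ds=0$. The only difference is cosmetic: you work directly in the first-order variables $w_1=w_0'$, $w_2=\beta w_0''$ rather than first reducing to the third-order problem for $w_0$, which also spares you the $C^3$ regularity the paper invokes.
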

\begin{proof}
We claim that the equation $\bs J \bs w = \bs w$ has no non-trivial solutions, so the result follows from the Fredholm alternative since $\bs J$ is compact \cite{Brezis2011}. Indeed, suppose $\bs w$ solves $\bs J \bs w= \bs w$. Then we must have $w_0 \in C^3[0,1]$ and 
\begin{align}
	\label{eq:w_0_third_order}
	\begin{cases}
	(\beta w_0'')' - \alpha w_0' = 0 \\
	w_0(0) = w_0'(0) = w_0''(1) = 0. 
	\end{cases} 
\end{align}
Setting $v = w_0'$ we find that, via integration by parts and the fact that $v(0) = v'(1) =0$,
\begin{align}
	\label{eq:v_iBP}
	\int_0^1 \beta (v')^2 ds  + \int_0^1 \alpha v^2 ds= 0.	
\end{align}
Since $\beta >0$ and $\alpha \geq 0$, we must have $v' = 0$ on $[0,1]$. From the boundary condition $v(0) =0, w_0(0)= 0$, we obtain $w_0 =0$ on $[0,1]$. This implies that $\bs w = \bs 0$. 
\end{proof}
Using Proposition \ref{prop:fredholm_alternative}, we may define 
\begin{align}
	\label{eq:reformulated_bifurcation}
	\bs F [\lambda, \bs w] :&= (\bs I - \bs J)^{-1}\bs f(\lambda, \bs w)\\
				&=  (\bs I - \lambda \bs L) \bs w - \bs g(\lambda, \bs w).
\end{align}
where 
\begin{align}
	\label{eq:reformulated_bifurcation_components}
	\bs L = (\bs I - \bs J)^{-1} \bs N, \quad \bs g = (\bs I - \bs J)^{-1} \bs h. 
\end{align}
Since $(\bs I - \bs J)^{-1}$ is bounded, $\bs L: \cl X \to \cl X$ and $\bs g: \bbR \times \cl X \to \cl X$ are compact operators, and $\bs g(\lambda, \bs w) = o(\|\bs w\|)$ as $\bs w \to 0$ for bounded $\lambda$ intervals. Thus the operator $\bs F: \bbR \times \cl X \to \cl X$ fits within the bifurcation framework of \cite{Rabinowitz1971, Rabinowitz1973}. We say $\mu$ is a characteristic value of $\bs F$ if there exists a nontrivial solution to $(\bs I - \mu \bs L) \bs w =0$, and denote the set of characteristic values of $\bs F$ as $\cl E(\bs F)$. We see that $\cl E (\bs F) = \cl E(\bs f)$, and henceforth denote both by $\cl E$.

For $\lambda \in \cl E$, we define, as in \cite{Rabinowitz1971, Rabinowitz1973}, the \emph{algebraic multiplicity} $m$ of $\lambda$ to be 
\begin{align}
	\label{eq:algebraic_multiplicity}
	m(\lambda, \bs L) :&= \dim \left[\bigcup_{n=1}^\infty \ker((\bs I - \lambda \bs L)^n)\right].
\end{align}
\begin{prop}
	\label{prop:algebraic_multiplicity_one}
	Under the same assumptions on $\alpha$ and $\beta$ as in Proposition \ref{thm:eigenvalue_simplicity}, $m(\lambda, \bs L) = 1$ for all $\lambda \in \cl E$. 
\end{prop}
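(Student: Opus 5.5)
The plan is to show two things: the geometric eigenspace $\ker(\bs I-\lambda\bs L)$ is one-dimensional, and the generalized eigenspaces do not grow beyond it, i.e. $\ker(\bs I-\lambda\bs L)^2=\ker(\bs I-\lambda\bs L)$. An induction then gives $\ker(\bs I-\lambda\bs L)^n=\ker(\bs I-\lambda\bs L)$ for every $n$: if $(\bs I-\lambda\bs L)^{n+1}\bs v=\bs 0$, then $(\bs I-\lambda\bs L)^{n-1}\bs v\in\ker(\bs I-\lambda\bs L)^2$. Hence $m(\lambda,\bs L)=1$.

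For the first step I use that $(\bs I-\lambda\bs L)\bs w=\bs 0$ is equivalent to $(\bs I-\bs J)\bs w-\lambda\bs N\bs w=\bs 0$, because $(\bs I-\bs J)^{-1}$ exists by Proposition~\ref{prop:fredholm_alternative}. Unwinding the integral system \eqref{eq:linear_integral_system}, the components satisfy $w_1=w_0'$, $w_2=\beta w_0''$ and $w_3=(\beta w_0'')'-\alpha w_0'$. The first component $w_0$ then solves the linearized problem \eqref{eq:linearized_BVP}, with the boundary conditions built into the limits of integration. Conversely, every classical eigenfunction generates such a $\bs w$. The map $\bs w\mapsto w_0$ is therefore a linear bijection from $\ker(\bs I-\lambda\bs L)$ onto the eigenspace $E$. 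Proposition~\ref{thm:eigenvalue_simplicity} then gives $\dim\ker(\bs I-\lambda\bs L)=1$.

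For the second step, suppose $(\bs I-\lambda\bs L)^2\bs v=\bs 0$. Set $\bs w:=(\bs I-\lambda\bs L)\bs v\in\ker(\bs I-\lambda\bs L)$, and assume for contradiction that $\bs w\neq\bs 0$. Applying $\bs I-\bs J$ and using $(\bs I-\bs J)\bs w=\lambda\bs N\bs w$ gives
\begin{align}
(\bs I-\bs J)\bs v-\lambda\bs N\bs v=\lambda\bs N\bs w .
\end{align}
Componentwise, this is the integral system \eqref{eq:linear_integral_system} for $\bs v$, except that the last equation becomes $v_3=\int_1^s\lambda(v_0+w_0)\,dt$. Differentiating as before, $v_0\in C^4$ solves the inhomogeneous problem
\begin{align}
(\beta v_0'')''-(\alpha v_0')'-\lambda v_0=\lambda w_0,
\end{align}
subject to the same four boundary conditions as \eqref{eq:linearized_BVP}.

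Next I multiply this equation by $w_0$ and integrate over $(0,1)$. Integrating by parts twice, the boundary terms vanish because both $v_0$ and $w_0$ satisfy $u(0)=u'(0)=u''(1)=[\alpha u'-(\beta u'')']_{s=1}=0$. This is the same symmetry that makes the form $a$ symmetric. The left side therefore becomes $a(v_0,w_0)-\lambda\int_0^1 v_0w_0\,ds$, which equals zero because $w_0$ is an eigenfunction (test \eqref{eq:weak-eigenvalue-problem} with $v=v_0$). Hence $\lambda\int_0^1 w_0^2\,ds=0$. Since every eigenvalue is positive, $w_0\equiv0$, and then $\bs w=\bs 0$ by the bijection above, a contradiction. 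Thus $\ker(\bs I-\lambda\bs L)^2=\ker(\bs I-\lambda\bs L)$.

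The main obstacle is the bookkeeping in the second step: one has to check carefully that the extra term $\lambda\bs N\bs w$ enters only the $w_3$-equation, so that the boundary conditions encoded by the integration limits $0$ and $1$ are exactly those of \eqref{eq:linearized_BVP}. One also needs enough regularity ($\alpha\in C^1$, $\beta\in C^2$) to justify the two integrations by parts classically.
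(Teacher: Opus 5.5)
Your proposal is correct and follows essentially the same route as the paper. You identify $\ker(\bs I-\lambda\bs L)$ with the one-dimensional eigenspace, show that a generalized eigenvector yields $v_0$ solving the inhomogeneous problem with right-hand side $\lambda w_0$, pair with $w_0$ to get $\lambda\int_0^1 w_0^2\,ds=0$, and conclude by induction; the only cosmetic difference is that you use the symmetry of the weak form $a$ to make the paired left-hand side vanish, whereas the paper integrates by parts fully to move the operator onto $w_0$.
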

\begin{proof}
	Fix $\lambda \in \cl E$. Recall that $\dim E(\lambda) = 1$, where $E(\lambda)$ denotes the eigenspace associated to $\lambda$ in \eqref{eq:linearized_BVP}. The integral formulation \eqref{eq:linear_integral_system} implies \begin{align}
		\label{eq:dim_f_F}
		\dim \ker(\bs I - \lambda \bs L) &= \dim \ker((\bs I - \bs J)^{-1}( \bs I -\bs J - \lambda \bs N))\\
						 &= \dim \ker (\bs I -\bs J -\lambda \bs N) =  1. 
	\end{align} Suppose there exists $\bs u \in \ker((\bs I - \lambda \bs L)^2)$ but $\bs u \notin \ker (\bs I - \lambda \bs L)$. Setting $\bs w := (\bs I - \lambda \bs L) \bs u \neq \bs 0$, we see that $\bs w \in \ker (\bs I -\lambda \bs L)$, i.e. $\bs 0 = (\bs I - \lambda \bs L)\bs w,$
	which implies that we also have 
	\begin{align}
	0 =  (\bs I - \bs J)( \bs I - \lambda \bs L)\bs w = (\bs I - \bs J - \lambda \bs N)\bs w. 
	\end{align}
	Setting $\bs w = (w_0, w_1, w_2, w_3)$, we see that $w_0$ solves \eqref{eq:linearized_BVP}. Observe also that $(\bs I - \bs J)\bs w = (\bs I - \bs J -\lambda \bs N)\bs u$, which implies 
	\begin{align}
		\label{eq:w-u0-relationship}
		(0, 0, 0, w_3) = \left( u_0 - \int_0^s u_1 dt, u_1 - \int_0^s \frac{u_2}{\beta}dt, u_2 - \int_1^s (u_3 + \alpha u_1)dt, u_3 - \lambda \int_1^s u_0 dt \right).
	\end{align}
	Since $w_3 = \lambda \int_1^s w_0 dt$, we use \eqref{eq:w-u0-relationship} to see that $u_0$ solves
	\begin{align}
		\label{eq:w_u_differential_relationship}
		\begin{cases} (\beta u_0'')'' - ( \alpha u_0')' -\lambda u_0 = \lambda w_0\\
				u_0(0) = u_0'(0) = u_0''(1) = [\alpha u_0' - (\beta u_0'')']_{s=1} = 0
		\end{cases}. 
	\end{align}
	Pairing \eqref{eq:w_u_differential_relationship} against $w_0$ and integration by parts yields 
	\begin{align}
		\label{eq:w_u_contradiction}
		\int_0^1 \lambda w_0^2 ds &= \int_{0}^1 (\beta u_0'')'' w_0 - (\alpha u_0')'w_0 - \lambda u_0 w_0 ds\\
					  &= \int_{0}^1 (\beta w_0'')'' u_0 - (\alpha w_0')'u_0 - \lambda w_0 u_0 ds = 0.  
	\end{align}
	This integration by parts is possible because $u_0$ and $w_0$ satisfy the same boundary conditions. Since $\lambda >0$, we must have that $w_0 \equiv 0$ on $[0,1]$, which implies that $\bs w = \bs 0$. This contradicts the fact that $\bs w \neq \bs 0$. Thus \[\ker((\bs I - \lambda \bs L)^2)  = \ker((\bs I - \lambda \bs L)), \] and this implies via induction that $m(\lambda, \bs L) = 1$.
\end{proof}
We are now prepared to state our main result. We define $\cl S(\bs f)$ to be the closure of non-trivial solutions to $\bs f[\lambda, \bs w] = \bs 0$ in $\bbR \times \cl X$, and $\cl S(\bs F)$ is defined similarly. Observe that $\cl S(\bs f) = \cl S(\bs F)$, so we denote both by $\cl S$.  We define a \emph{bifurcation point} $(p, \bs b) \in \bbR \times \cl X$ to be a point where $\bs f[p, \bs b] = \bs 0$ and, for any neighborhood of $(p, \bs b)$, there exists a non-trivial solution to $\bs f = \bs 0$. See \cite{antman05, Rabinowitz1973} for more details.   
\begin{thm}[Global bifurcation for cantilever strain-gradient rod]
\label{thm:global_bif_theorem}
	The set of bifurcation points from the trivial branch of $\bs f$ is exactly $\{(\lambda, \bs 0), \lambda \in \cl E\}$. For $\lambda \in \cl E$, $(\lambda, \bs 0)$ belongs to a maximal connected subset $\cl C(\lambda) \subset \cl S$ which satisfies at least one of the following:
	\begin{enumerate}
		\item $\cl C(\lambda)$ is unbounded in $\bbR \times \cl X$. 
		\item $\cl C(\lambda)$ contains another bifurcation point $(\mu, \bs 0)$. 
	\end{enumerate}
	Furthermore, near $(\lambda, \bs 0)$, $\cl C(\lambda)$ is locally an analytic curve. 
\end{thm}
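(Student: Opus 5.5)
The plan is to work with the equivalent compact formulation $\bs F[\lambda,\bs w]=(\bs I-\lambda\bs L)\bs w-\bs g(\lambda,\bs w)$. Since $(\bs I-\bs J)^{-1}$ is a bounded bijection, $\bs f=\bs 0$ if and only if $\bs F=\bs 0$. Hence the solution sets, bifurcation points and connected components coincide for $\bs f$ and $\bs F$.

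\emph{Necessity.} Suppose $(\mu,\bs 0)$ is a bifurcation point with $\mu\notin\cl E$. Then $\bs I-\mu\bs L$ is invertible, because $\bs L$ is compact and $\mu$ is not a characteristic value. Near $(\mu,\bs 0)$ the relation $\bs w=(\bs I-\lambda\bs L)^{-1}\bs g(\lambda,\bs w)$ holds, with the inverse uniformly bounded for $\lambda$ near $\mu$. Combined with $\bs g=o(\|\bs w\|)$ uniformly in $\lambda$, this gives $\|\bs w\|\le \tfrac12\|\bs w\|$ for small $\|\bs w\|$, so $\bs w=\bs 0$. This is a contradiction, so every bifurcation point lies in $\{(\lambda,\bs 0)\mid\lambda\in\cl E\}$.

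\emph{Sufficiency and the global alternative.} By Proposition~\ref{prop:algebraic_multiplicity_one}, each $\lambda\in\cl E$ has algebraic multiplicity $m(\lambda,\bs L)=1$, which is odd. Rabinowitz's global bifurcation theorem \cite{Rabinowitz1971} therefore applies to $\bs F$: $\bs L$ is compact, and $\bs g$ is compact, continuous and $o(\|\bs w\|)$ uniformly on bounded $\lambda$-intervals. It gives that $(\lambda,\bs 0)$ lies in a maximal connected subset $\cl C(\lambda)\subset\cl S$ that is either unbounded or contains $(\mu,\bs 0)$ with $\mu\in\cl E$, $\mu\ne\lambda$. In particular $(\lambda,\bs 0)$ is a bifurcation point, which completes the characterization.

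\emph{Local analyticity.} I would apply the Crandall--Rabinowitz theorem \cite{crandallrabinowitz71} in its analytic form. The map $(\lambda,\bs w)\mapsto\bs F[\lambda,\bs w]$ is real-analytic from $\bbR\times\cl X$ to $\cl X$, since $\bs w\mapsto\int_1^s\lambda(\sin w_0-w_0)\,dt$ is a composition of a bounded linear map with the entire Nemytskii operator $w_0\mapsto\sin w_0-w_0$ on $C[0,1]$. Its linearization at $(\lambda_0,\bs 0)$ is $\bs I-\lambda_0\bs L$, which is Fredholm of index zero. Its kernel is spanned by some $\bs w^*$, by $\dim E=1$ from Proposition~\ref{thm:eigenvalue_simplicity}, so its range has codimension one. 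The transversality condition asks that $\partial_\lambda D_{\bs w}\bs F\,\bs w^*=-\bs L\bs w^*\notin\operatorname{Range}(\bs I-\lambda_0\bs L)$. Note that $\bs L\bs w^*=\lambda_0^{-1}\bs w^*$ with $\lambda_0>0$, so it suffices that $\bs w^*\notin\operatorname{Range}(\bs I-\lambda_0\bs L)$. If instead $\bs w^*=(\bs I-\lambda_0\bs L)\bs u$, then $\bs u\in\ker(\bs I-\lambda_0\bs L)^2\setminus\ker(\bs I-\lambda_0\bs L)$, which contradicts $m=1$. The theorem then yields an analytic curve $\epsilon\mapsto(\lambda(\epsilon),\epsilon\bs w^*+o(\epsilon))$ that contains all nontrivial solutions in a neighborhood of $(\lambda_0,\bs 0)$. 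That curve must therefore coincide locally with $\cl C(\lambda_0)$.

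\emph{Main obstacle.} The step needing most care is the transversality/range condition together with verifying analyticity of $\bs F$ in the chosen space $\cl X$. Both reduce to facts already proved: the algebraic simplicity in Proposition~\ref{prop:algebraic_multiplicity_one} and the bounded invertibility in Proposition~\ref{prop:fredholm_alternative}. So I expect this step to be a short check rather than a genuine difficulty.
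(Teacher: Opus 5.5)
Your proposal is correct and follows essentially the same route as the paper. Both use Rabinowitz's global theorem, via algebraic multiplicity one, for existence of the branch and the global alternative; an implicit-function-type argument to exclude $\lambda\notin\cl E$; and Crandall--Rabinowitz for local analyticity. Where the paper only cites, you supply the direct contraction estimate for necessity and verify the transversality condition explicitly from $\ker(\bs I-\lambda_0\bs L)^2=\ker(\bs I-\lambda_0\bs L)$.
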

\begin{proof}
	Each $\lambda \in \cl E$ has algebraic multiplicity $1$, so Theorem 1.3 of \cite{Rabinowitz1971} implies that $(\lambda, \bs 0)$ is a bifurcation point of $\bs F$, and therefore of $\bs f$. Theorem 1.3 also provides the global alternative for $\cl C(\lambda)$. If $\lambda \not\in \cl E$, the implicit-function theorem implies that $(\lambda, \bs 0)$ cannot be a bifurcation point of $\bs f$ (see Theorem 5.4.1 in \cite{antman05}). Observe that $\bs h: \bbR \times \cl X \to \cl X$ is analytic, which implies that $\bs g$ is analytic as well. The remark following Theorem 1.19 of \cite{Rabinowitz1973} and Theorem 1.7 of \cite{crandallrabinowitz71} then yield the local analyticity of $\cl C(\lambda)$. 
\end{proof}
{
\begin{rem}
    In the classical case, it is well known that only option $(1)$ in the above theorem can occur for the non-trivial branches. The proof of this fact relies on nodal properties of the associated second-order equation. The question of whether such a result is true in the present case is open.
\end{rem}
}
\subsection{A quantitative lower bound for the first bifurcation load}
For the remainder of this section, we
assume that $\alpha = 0$.
By the variational characterization of the first eigenvalue \cite{Brezis2011},
\[
    \lambda_1
    =
    \min_{u\in V\setminus\{0\}}
    \frac{a(u,u)}{\|u\|_{L^2}^2}.
\]
Consequently,
\[
\begin{aligned}
    \lambda_1
    &\geq
    \widehat{\beta}
    \min_{v\in V\setminus\{0\}}
    \frac{\displaystyle\int_0^1
        \bigl(v''(s)\bigr)^2\,ds}
         {\displaystyle\int_0^1 v(s)^2\,ds} 
    = \widehat{\beta}\,\mu_1^4, \quad \widehat{\beta} := \min_{s \in [0,1]} \beta,
\end{aligned}
\]
where $\mu_1^4$ is dimensionless and the first eigenvalue of the constant-coefficient boundary value
problem
\begin{equation}
\label{eq:constant-coefficient-linear-problem}
\begin{cases}
    w''''=\mu^4w, & s\in(0,1),\\
    w(0)=w'(0)=0,
    w''(1)=w'''(1)=0.
\end{cases}
\end{equation}
The same lower bound holds for every $\alpha\geq0$, because the nonnegative term $\int_0^1\alpha(u')^2\,ds$ may simply be discarded from the Rayleigh quotient. The comparison problem \eqref{eq:constant-coefficient-linear-problem} and its natural boundary conditions are unchanged. The characteristic values $\mu_k$ are the positive roots of
\[
    1+\cosh(\mu)\cos(\mu)=0.
\]
In particular,
\[
    \mu_1 \approx 1.875104069,
    \quad
    \mu_1^4 \approx 12.362363368,
\]
and therefore
\begin{equation}
\label{eq:first-load-lower-bound}
    \lambda_1
    \geq
    12.362363368\,\widehat{\beta}.
\end{equation} 
This value of $\mu_1$ was also computed in the purely linear analysis of \cite{Eremeyev2026}.  
Since the eigenvalues of the linearized problem are precisely the
bifurcation values established above, \eqref{eq:first-load-lower-bound}
gives a lower bound for the load at which bifurcation from the straight
configuration may occur. This statement concerns bifurcation from the
trivial branch; it does not exclude the possible existence of nontrivial
equilibria at smaller loads that are not connected to the trivial
configuration. For reference, Table~\ref{tab:cantilever-characteristic-values} compares
the first several eigenvalues
\eqref{eq:constant-coefficient-linear-problem} with those of the
corresponding classical cantilever problem. The two columns should not be read as a coefficient-free comparison of physical stiffness, since the classical and strain-gradient moduli have different dimensions. Rather, under the displayed dimensionless normalizations, the spectra exhibit the characteristic fourth-order and second-order growth rates $\mu_k^4\sim k^4$ and $\mu_{0k}^2\sim k^2$, respectively.

\begin{table}[ht]
\centering
\begin{tabular*}{0.8\textwidth}{@{\extracolsep{\fill}}ccc}
\hline
$k$ & $\mu_k^4$ & $\mu_{0k}^2$ \\
\hline
1 & 12.362363368 & 2.467401100 \\
2 & 485.518818513 & 22.206609902 \\
3 & 3806.546266391 & 61.685027507 \\
4 & 14617.273305119 & 120.902653913 \\
5 & 39943.831778509 & 199.859489122 \\
6 & 89135.405071423 & 298.555533133 \\
7 & 173881.315656106 & 416.990785946 \\
8 & 308208.452093651 & 555.165247561 \\
9 & 508481.543266067 & 713.078917979 \\
10 & 793403.134540116 & 890.731797198 \\
\hline
\end{tabular*}
\caption{Comparison of the first ten eigenvalues $\mu^4_k$ and {$\mu_{0k}^2$} of the
        linearized strain-gradient and linearized classical cantilever problem, respectively.
        A comparison of $\mu_k$ and $\mu_{0k}$ can be found in
        \cite{Eremeyev2026}. Numerical values computed with Wolfram 15.0.}
\label{tab:cantilever-characteristic-values}
\end{table}

\subsection{Numerical illustration of buckled states}

We conclude with a numerical illustration of the nonlinear equilibria
whose existence is guaranteed by the bifurcation result above. We
specialize further by assuming that
\[
    \beta(s)\equiv\beta_0
\]
for some constant $\beta_0>0$. The nonlinear boundary-value problem then
becomes
\[
\begin{cases}
    \beta_0 u''''-\lambda\sin u=0,
        & s\in(0,1),\\
    u(0)=u'(0)=0,
    u''(1)=u'''(1)=0.
\end{cases}
\]
Introducing the dimensionless load
\[
    \widehat{\lambda}:=\frac{\lambda}{\beta_0},
\]
we obtain
\begin{equation}
\label{eq:dimensionless-nonlinear-problem}
\begin{cases}
    u''''-\widehat{\lambda}\sin u=0,
        & s\in(0,1),\\
    u(0)=u'(0)=0,
    u''(1)=u'''(1)=0.
\end{cases}
\end{equation}
This is the Euler--Lagrange equation associated with the functional
\begin{equation}
\label{eq:dimensionless-energy}
    \Phi_{\widehat{\lambda}}(u)
    =
    \int_0^1
    \left[
        \frac12\bigl(u''(s)\bigr)^2
        +
        \widehat{\lambda}\bigl(\cos u(s)-1\bigr)
    \right]ds
\end{equation}
on the space $V$. The conditions at $s=0$ are imposed as essential
boundary conditions, while the conditions at $s=1$ arise naturally from
the variational principle. 

The functional \eqref{eq:dimensionless-energy} was discretized using
cubic Hermite finite elements for the tangent angle $u$. The linearized
finite-element problem was first used to approximate the eigenvalues and
eigenmodes associated with the first three buckling modes. Each
eigenmode was then used to initialize a nearby nontrivial equilibrium,
and the corresponding positive branch was followed using
pseudo-arclength continuation. Continuation was performed using \texttt{BifurcationKit.jl} \cite{BifurcationKit}. For each computed angle field $u$, the rod centerline was reconstructed
from
\begin{equation}
\label{eq:centerline-reconstruction}
    x(s)=\int_0^s\cos\bigl(u(t)\bigr)\,dt,
    \quad
    y(s)=\int_0^s\sin\bigl(u(t)\bigr)\,dt.
\end{equation}
Representative configurations along the
three branches are displayed in
Figure~\ref{fig:strain-gradient-hermite-modes}.

The purpose of these computations is only to illustrate the geometric
form of the nonlinear equilibria predicted by the preceding analysis.
A detailed numerical study of the branches, including questions of
convergence, stability, and secondary bifurcation, is left for future
work.

\begin{figure}[p]
    \centering
    \captionsetup{font=small}
    \captionsetup[subfigure]{font=small,skip=3pt}
    \begin{subfigure}[t]{0.69\linewidth}
        \centering
        \includegraphics[width=\linewidth,trim=0 35 6 48,clip]{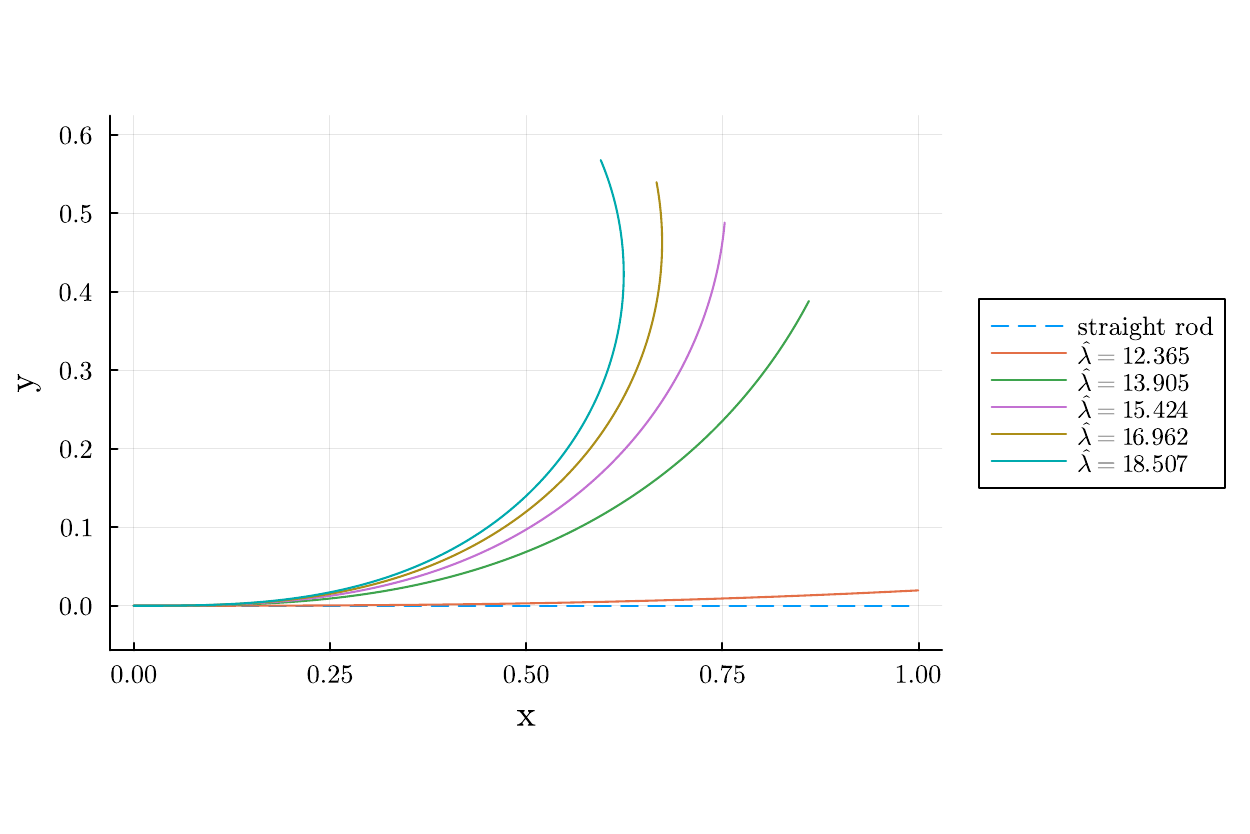}
        \caption{Branch 1}
        \label{fig:strain-gradient-hermite-mode-1}
    \end{subfigure}\par\smallskip
    \begin{subfigure}[t]{0.69\linewidth}
        \centering
        \includegraphics[width=\linewidth,trim=0 35 6 48,clip]{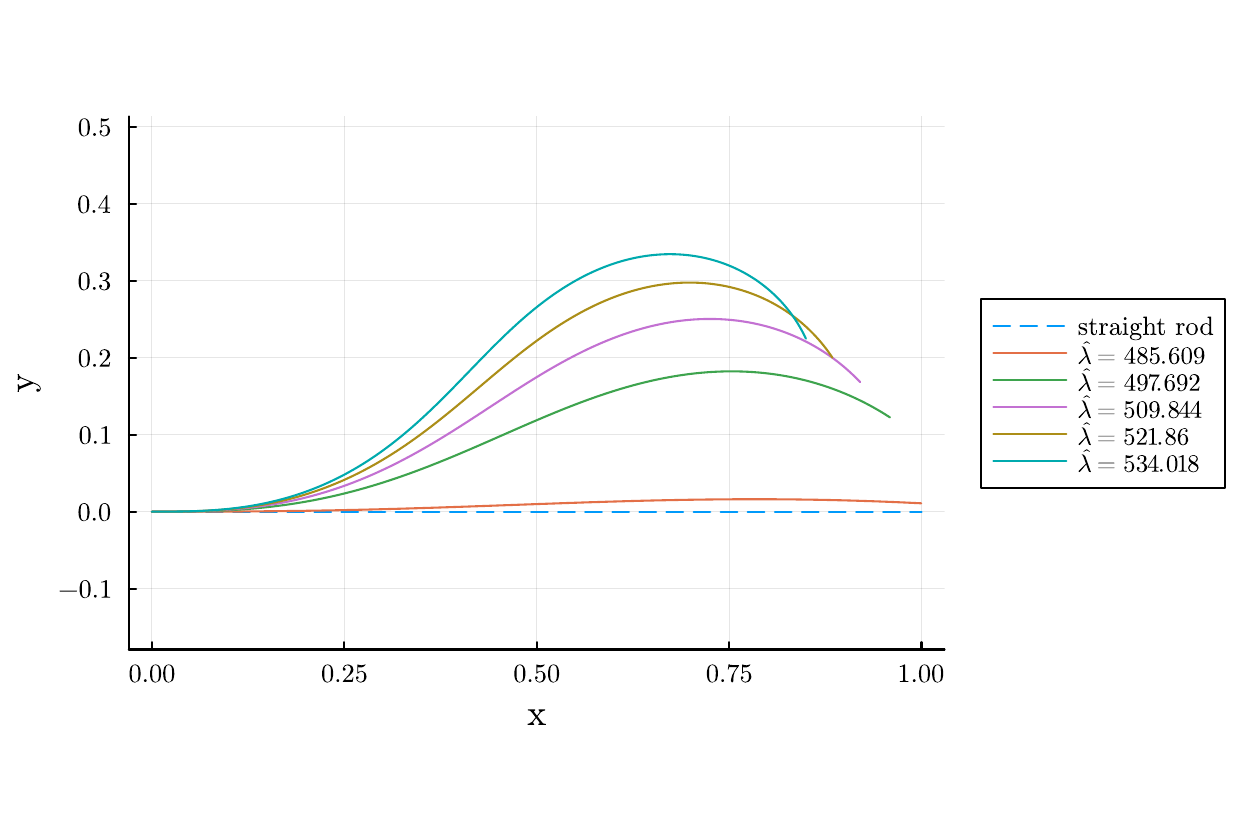}
        \caption{Branch 2}
        \label{fig:strain-gradient-hermite-mode-2}
    \end{subfigure}\par\smallskip
    \begin{subfigure}[t]{0.69\linewidth}
        \centering
        \includegraphics[width=\linewidth,trim=0 35 6 48,clip]{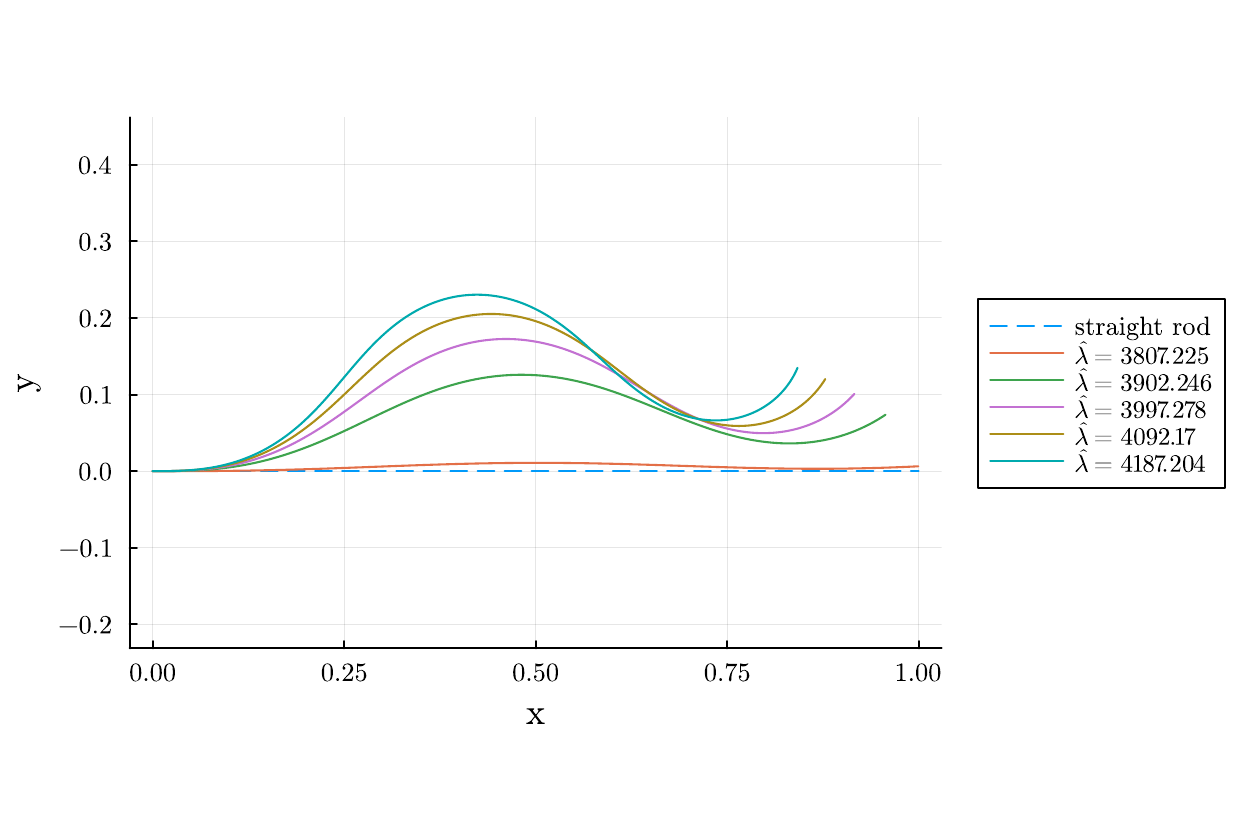}
        \caption{Branch 3}
        \label{fig:strain-gradient-hermite-mode-3}
    \end{subfigure}

    \caption{Representative equilibria on the positive branches issuing
        from the first three bifurcation loads of
        \eqref{eq:dimensionless-nonlinear-problem}, computed as described
        in the text. Legends give the dimensionless loads
        $\widehat{\lambda}$. Each panel shows five equilibria selected at
        evenly spaced continuation parameter values over load ranges
        equal to 50\%, 10\%, and 10\% of the corresponding initial
        bifurcation load for branches 1, 2, and 3, respectively.
        No conclusion regarding stability is intended.}
    \label{fig:strain-gradient-hermite-modes}
\end{figure}

\FloatBarrier
\section{Conclusion}

We have developed a direct theory of strain-gradient special Cosserat rods based on the principle of virtual work, without prescribing a three-dimensional parent material or assuming the existence of a stored energy. The formulation yields governing equations, higher-order contact interactions, and natural boundary conditions, and accommodates internal constraints, objective constitutive laws, and material symmetries. For a planar cantilever under terminal compression, we establish that the eigenvalues of the linearized problem are simple and are precisely the nonlinear bifurcation loads, each giving rise to a connected global continuum of equilibria.

The theory developed here provides a foundation for further studies of nonlinear dynamics and global postbuckling behavior. On the dynamical side, a natural extension is to allow the kinetic-energy density to depend on material-coordinate derivatives of the centerline velocity and cross-sectional angular velocity, and to determine the corresponding inertial body forces, field equations and boundary interactions. This extension raises the broader question of which constitutive and boundary assumptions ensure well-posed dynamics, particularly when the stored energy admits floppy modes. For infinite rods, a related question is the existence of solitary waves and, if so, whether these waves are stable. On the equilibrium side, it remains open whether every global continuum of buckled states for the cantilever beam is unbounded or can connect distinct bifurcation points. Resolving these questions would connect the present theory to the propagation of localized disturbances and the accessibility of large-deformation equilibria in synthesized higher-gradient one-dimensional materials, including pantographic beams. 

\subsection*{Conflict of Interest Statement.} The authors declare that they have no conflicts of interest.

\subsection*{Data Availability Statement} No datasets were generated or analyzed during the course of this study.

\FloatBarrier
\bibliographystyle{abbrv}
\bibliography{literature}

\bigskip

\footnotesize

	\noindent \textsc{Department of Mathematics, The University of North Carolina at Chapel Hill, CB 3250 Phillips Hall,
		Chapel Hill, NC 27599}\\
	\noindent \textit{E-mail address}: \texttt{scott\_hallyburton@unc.edu}
	
	\bigskip	
	
	\noindent \textsc{Department of Mathematics, The University of North Carolina at Chapel Hill, CB 3250 Phillips Hall,
		Chapel Hill, NC 27599}\\
	\noindent \textit{E-mail address}: \texttt{crodrig@email.unc.edu}

\end{document}